\documentclass{article}
\usepackage[margin=1in]{geometry}
\usepackage[utf8]{inputenc}

\usepackage{tikz}
\usetikzlibrary{snakes,arrows,shapes,calc,positioning}
\usepackage{amsmath, amssymb, graphicx, listings, verbatim, enumerate, relsize, algorithmicx, tikz}
\usepackage{enumitem}
\usepackage{hyperref}
\usepackage{cleveref}
\usepackage{mathdots}
\usepackage{mathrsfs}
\usepackage{mathtools}
\usepackage{subcaption}
\usepackage{diagbox}
\usepackage{amsthm}
\usepackage{thmtools,thm-restate}
\usepackage{cite}

\usepackage[noend]{algpseudocode}
\usepackage[ruled, vlined]{algorithm2e}

\usepackage[textsize=tiny,textwidth=0.6in]{todonotes}

\usepackage{authblk}

\title{Bandwidth Optimal Pipeline Schedule for Collective Communication}

\author{Liangyu Zhao}
% \author[2]{Siddharth Pal}
% \author[2]{Prithwish Basu}
% \author[2]{Joud Khoury}
\author{Arvind Krishnamurthy}

\affil{University of Washington}
% \affil[2]{Raytheon BBN}

\date{}

\newcommand{\brac}[1]{\left[#1\right]}
\newcommand{\abs}[1]{\left|#1\right|}

\DeclareMathOperator{\R}{\mathbb{R}}
\DeclareMathOperator{\Q}{\mathbb{Q}}
\DeclareMathOperator{\Z}{\mathbb{Z}}
\DeclareMathOperator{\N}{\mathbb{N}}
\DeclareMathOperator{\I}{\mathbb{I}}
\DeclareMathOperator{\cF}{\mathcal{F}}
\DeclareMathOperator{\cV}{\mathcal{V}}

\DeclareMathOperator{\cO}{\mathcal{O}}
\DeclareMathOperator{\cE}{\mathcal{E}}

\DeclareMathOperator*{\argmax}{arg\,max}

\SetKwFunction{emap}{routing}

\setlength{\parindent}{0em}
\captionsetup[subfigure]{aboveskip=-15pt}

\setlist[enumerate]{itemsep=0pt}
\setlist[itemize]{itemsep=0pt}

\begin{document}

\maketitle

\begin{abstract}
    We present a strongly polynomial-time algorithm to generate bandwidth optimal allgather/reduce-scatter on any network topology, with or without switches. Our algorithm constructs pipeline schedules achieving provably the best possible bandwidth performance on a given topology. To provide a universal solution, we model the network topology as a directed graph with heterogeneous link capacities and switches directly as vertices in the graph representation. The algorithm is strongly polynomial-time with respect to the topology size. This work heavily relies on previous graph theory work on edge-disjoint spanning trees~\cite{bang-jensen, tarjan, edmonds, berczi2010packing, schrijver} and edge splitting~\cite{bang-jensen,frank,jackson}. While we focus on allgather, the methods in this paper can be easily extended to generate schedules for reduce, broadcast, reduce-scatter, and allreduce.
\end{abstract}

\section{Introduction}

In this section, we explain the foundational concepts necessary for introducing the schedule generation algorithm. In \S\ref{sec:aglowerbound}, we introduce a lower bound for the bandwidth runtime of allgather. This lower bound is also the optimal bandwidth runtime we aim to achieve. In \S\ref{sec:pipeline}, we briefly introduce the pipeline schedule. In \S\ref{sec:dilemma}, we describe a dilemma that makes pipeline schedule the only way to achieve bandwidth optimality.

\subsection{Allgather Lower Bound}\label{sec:aglowerbound}

Let $G=(V=V_s\cup V_c,E)$ be an arbitrary network topology, where $V_s$ are the \textit{switch nodes}, and $V_c$ are the \textit{compute nodes}. In allgather, only nodes in $V_c$ need to broadcast and receive data. Given any partition $S$ of $G$, every compute node in $S$ needs to send its shard of data out of $S$ if $S$ does not include all compute nodes. Thus, a lower bound for the allgather bandwidth runtime $T_B$ in $G$ is
\begin{equation}\label{eq:lowerbound}
    T_B\geq\frac{M}{N}\max_{S\subset V,S\not\supseteq V_c}\frac{|S\cap V_c|}{B^+_G(S)},
\end{equation}
where $M$ is the total data size, $N=|V_c|$, and $B^+_G(S)$ is the total bandwidth exiting the partition $S$. In this paper, we will show how to construct pipeline schedule that achieves this lower bound. Thus, as an ancillary result, lower bound (\ref{eq:lowerbound}) is exactly the optimal allgather bandwidth runtime.

\subsection{Pipeline Schedule}\label{sec:pipeline}

Pipeline schedule has long been known to improve bandwidth performance of collective communications. In a fully pipelined schedule, the bandwidth runtime is decided by the most congested link in the network. Formally speaking, if each link $e$ needs to deliver $m_e$ total amount of data in the schedule and has $b_e$ bandwidth, then the runtime of pipeline schedule is $\max_{e\in E}m_e/b_e$. Any step-based communication schedule can also be pipelined. The pipelined version always has bandwidth runtime no higher than the unpipelined one. \\
% Thus, by demonstrating our ability to generate optimal pipeline schedule, we show that our method generates schedules with optimal bandwidth performance among any type of schedules.\\

In a direct-connect topology where every node is a compute node, allgather pipeline schedule simply consists of a bunch of spanning out-trees from every node. However, this is no longer the case in switch topology. First of all, in switch topology, allgather only needs to broadcast from compute nodes and span compute nodes. Secondly, a spanning tree in switch topology may require switch nodes to broadcast data (figure \ref{fig:topo_tree}). While this is not a problem for programmable switches or switches supporting multicast, this limits the applicability of our work. In this paper, we utilize a graph theory technique called \textit{edge splitting} to convert the switch topology into a compute-node-only logical topology (figure \ref{fig:topo_split}). The generated pipeline schedule still achieves the optimal bandwidth runtime (\ref{eq:lowerbound}) in original switch topology. Thus, as an ancillary contribution, our results show that using switches capable of broadcast/reduction does not improve the bandwidth performance in mathematical cost model.

\subsection{Minimality-or-Saturation Dilemma}\label{sec:dilemma}

\begin{figure*}[t]
    \centering
    \begin{subfigure}{0.24\textwidth}
        \centering
        \scalebox{0.7}{
\begin{tikzpicture}[node/.style={rectangle,draw=black,minimum size=7mm,align=center}]
    \node[node]	(10)	at (-1.5,2.5-1) {$v^c_{1,1}$};
    \node[node]	(11)	at (-0.5,2.5-1) {$v^c_{1,2}$};
    \node[node]	(12)	at (0.5,2.5-1) {$v^c_{1,3}$};
    \node[node]	(13)	at (1.5,2.5-1) {$v^c_{1,4}$};
    \node[node,text width=3.5cm]	(14)	at (0,0.5+2.5) {Switch $v^s_1$};
    
    \path[latex-latex,anchor=east,line width=1pt] (10.north) edge node {$10b$} (10.north|-14.south);
    \path[latex-latex,line width=1pt] (11.north) edge (11.north|-14.south);
    \path[latex-latex,line width=1pt] (12.north) edge (12.north|-14.south);
    \path[latex-latex,line width=1pt] (13.north) edge (13.north|-14.south);
    
    \node[node]	(0)	at (-1.5,1-2.5) {$v^c_{2,1}$};
    \node[node]	(1)	at (-0.5,1-2.5) {$v^c_{2,2}$};
    \node[node]	(2)	at (0.5,1-2.5) {$v^c_{2,3}$};
    \node[node]	(3)	at (1.5,1-2.5) {$v^c_{2,4}$};
    \node[node,text width=3.5cm]	(4)	at (0,-0.5-2.5) {Switch $v^s_2$};
    
    \path[latex-latex,anchor=east,line width=1pt] (0.south) edge node {$10b$} (0.south|-4.north);
    \path[latex-latex,line width=1pt] (1.south) edge (1.south|-4.north);
    \path[latex-latex,line width=1pt] (2.south) edge (2.south|-4.north);
    \path[latex-latex,line width=1pt] (3.south) edge (3.south|-4.north);
            
    \node[node,text width=4cm]	(20)	at (0,0) {Switch $v^s_0$};
    
    \path[latex-latex,anchor=east] (0.north) edge node {$b$} (0.north|-20.south);
    \path[latex-latex] (1.north) edge (1.north|-20.south);
    \path[latex-latex] (2.north) edge (2.north|-20.south);
    \path[latex-latex] (3.north) edge (3.north|-20.south);
    
    \path[latex-latex,anchor=east] (10.south) edge node {$b$} (10.south|-20.north);
    \path[latex-latex] (11.south) edge (11.south|-20.north);
    \path[latex-latex] (12.south) edge (12.south|-20.north);
    \path[latex-latex] (13.south) edge (13.south|-20.north);
\end{tikzpicture}
}
        \caption{}
        \label{fig:topo}
    \end{subfigure}
    \begin{subfigure}{0.24\textwidth}
        \centering
        \scalebox{0.7}{
\begin{tikzpicture}[node/.style={rectangle,draw=black,minimum size=7mm,align=center}]
    \node[node]	(10)	at (-1.5,2.5-1) {$v^c_{1,1}$};
    \node[node]	(11)	at (-0.5,2.5-1) {$v^c_{1,2}$};
    \node[node]	(12)	at (0.5,2.5-1) {$v^c_{1,3}$};
    \node[node]	(13)	at (1.5,2.5-1) {$v^c_{1,4}$};
    \node[node,text width=3.5cm]	(14)	at (0,0.5+2.5) {Switch $v^s_1$};
    
    \path[latex-latex,line width=1pt] (10.north) edge (10.north|-14.south);
    \path[latex-latex,line width=1pt] (11.north) edge (11.north|-14.south);
    \path[latex-latex,line width=1pt] (12.north) edge (12.north|-14.south);
    \path[latex-latex,line width=1pt] (13.north) edge (13.north|-14.south);
    
    \node[node]	(0)	at (-1.5,1-2.5) {$v^c_{2,1}$};
    \node[node]	(1)	at (-0.5,1-2.5) {$v^c_{2,2}$};
    \node[node]	(2)	at (0.5,1-2.5) {$v^c_{2,3}$};
    \node[node]	(3)	at (1.5,1-2.5) {$v^c_{2,4}$};
    \node[node,text width=3.5cm]	(4)	at (0,-0.5-2.5) {Switch $v^s_2$};
    
    \path[latex-latex,line width=1pt] (0.south) edge (0.south|-4.north);
    \path[latex-latex,line width=1pt] (1.south) edge (1.south|-4.north);
    \path[latex-latex,line width=1pt] (2.south) edge (2.south|-4.north);
    \path[latex-latex,line width=1pt] (3.south) edge (3.south|-4.north);
            
    \node[node,text width=4cm]	(20)	at (0,0) {Switch $v^s_0$};
    
    \path[latex-latex] (0.north) edge (0.north|-20.south);
    \path[latex-latex] (1.north) edge (1.north|-20.south);
    \path[latex-latex] (2.north) edge (2.north|-20.south);
    \path[latex-latex] (3.north) edge (3.north|-20.south);
    
    \draw[line width=0.75pt, dashed] (-2.75,0.75) to[out=10, in=190] (2.75,0.75);
    \node[font=\Large] at (2.5, 1.25) {$S^*$};
    
    \path[latex-latex] (10.south) edge (10.south|-20.north);
    \path[latex-latex] (11.south) edge (11.south|-20.north);
    \path[latex-latex] (12.south) edge (12.south|-20.north);
    \path[latex-latex] (13.south) edge (13.south|-20.north);
\end{tikzpicture}
}
        \caption{}
        \label{fig:topo_partition}
    \end{subfigure}
    \begin{subfigure}{0.24\textwidth}
        \centering
        \scalebox{0.7}{
\begin{tikzpicture}[node/.style={rectangle,draw=black,minimum size=7mm,align=center}]
    \node[node, line width=1pt]	(10)	at (-1.5,2.5-1) {$v^c_{1,1}$};
    \node[node]	(11)	at (-0.5,2.5-1) {$v^c_{1,2}$};
    \node[node]	(12)	at (0.5,2.5-1) {$v^c_{1,3}$};
    \node[node]	(13)	at (1.5,2.5-1) {$v^c_{1,4}$};
    \node[node,text width=3.5cm]	(14)	at (0,0.5+2.5) {Switch $v^s_1$};
    
    \path[-latex] (10.north) edge (10.north|-14.south);
    \path[latex-] (11.north) edge (11.north|-14.south);
    \path[latex-] (12.north) edge (12.north|-14.south);
    \path[latex-] (13.north) edge (13.north|-14.south);
    
    \node[node]	(0)	at (-1.5,1-2.5) {$v^c_{2,1}$};
    \node[node]	(1)	at (-0.5,1-2.5) {$v^c_{2,2}$};
    \node[node]	(2)	at (0.5,1-2.5) {$v^c_{2,3}$};
    \node[node]	(3)	at (1.5,1-2.5) {$v^c_{2,4}$};
    \node[node,text width=3.5cm]	(4)	at (0,-0.5-2.5) {Switch $v^s_2$};
            
    \node[node,text width=4cm]	(20)	at (0,0) {Switch $v^s_0$};
    
    \path[latex-] (0.north) edge (0.north|-20.south);
    \path[latex-] (1.north) edge (1.north|-20.south);
    \path[latex-] (2.north) edge (2.north|-20.south);
    \path[latex-] (3.north) edge (3.north|-20.south);
    
    \path[-latex] (10.south) edge (10.south|-20.north);
\end{tikzpicture}
}
        \caption{}
        \label{fig:topo_tree}
    \end{subfigure}
    \begin{subfigure}{0.24\textwidth}
        \centering
        \scalebox{0.7}{
\begin{tikzpicture}[node/.style={rectangle,draw=black,minimum size=7mm,align=center}]
    \node[node]	(10)	at (-1.5,2.5-1) {$v^c_{1,1}$};
    \node[node]	(11)	at (-0.5,2.5-1) {$v^c_{1,2}$};
    \node[node]	(12)	at (0.5,2.5-1) {$v^c_{1,3}$};
    \node[node]	(13)	at (1.5,2.5-1) {$v^c_{1,4}$};

    \node[node,text width=3.5cm,draw=none]	(14)	at (0,0.5+2.5) {};

    \draw[-latex,line width=1pt] (10.north) to[out=70, in=110] (11.north);
    \draw[-latex,line width=1pt] (11.north) to[out=70, in=110] (12.north);
    \draw[-latex,line width=1pt] (12.north) to[out=70, in=110] (13.north);
    \draw[-latex, anchor=south,line width=1pt] (13.70) to[out=110, in=70] node {$10b$} (10.110);
    
    \node[node]	(0)	at (-1.5,1-2.5) {$v^c_{2,1}$};
    \node[node]	(1)	at (-0.5,1-2.5) {$v^c_{2,2}$};
    \node[node]	(2)	at (0.5,1-2.5) {$v^c_{2,3}$};
    \node[node]	(3)	at (1.5,1-2.5) {$v^c_{2,4}$};
    
    \node[node,text width=3.5cm,draw=none]	(4)	at (0,-0.5-2.5) {};
    
    \draw[-latex,line width=1pt] (0.south) to[out=-70, in=-110] (1.south);
    \draw[-latex,line width=1pt] (1.south) to[out=-70, in=-110] (2.south);
    \draw[-latex,line width=1pt] (2.south) to[out=-70, in=-110] (3.south);
    \draw[-latex, anchor=north,line width=1pt] (3.-70) to[out=-110, in=-70] node {$10b$} (0.-110);

    \draw[latex-] (0.north) to[out=70, in=110] (1.north);
    \draw[latex-] (1.north) to[out=70, in=110] (2.north);
    \draw[latex-] (2.north) to[out=70, in=110] (3.north);
    \draw[latex-] (10.-110) to[out=-110, in=110] (0.110);
    \draw[-latex] (10.south) to[out=-70, in=-110] (11.south);
    \draw[-latex] (11.south) to[out=-70, in=-110] (12.south);
    \draw[-latex] (12.south) to[out=-70, in=-110] (13.south);
    \draw[-latex,anchor=west] (13.-70) to[out=-70, in=70] node {$b$} (3.70);

    \draw[line width=0.75pt, dashed] (-2.75,0.5) to[out=10, in=190] (2.75,0.5);
    \node[font=\Large] at (2.5, 1) {$S^*$};  
\end{tikzpicture}
}
        \caption{}
        \label{fig:topo_split2}
    \end{subfigure}
    \caption{An 8-compute-node switch topology in 2-cluster setting. The thick links have 10x the bandwidth of the thin ones. Figure (a) shows the original switch topology. Figure (b) shows the bottleneck cut in this topology. Figure (c) shows a pipeline spanning tree rooted at $v_{1,1}^c$ with switch-node broadcast. Figure (d) shows a suboptimal way of transforming the switch topology into a direct-connect logical topology (resulting in 4x greater optimal runtime).}
\end{figure*}
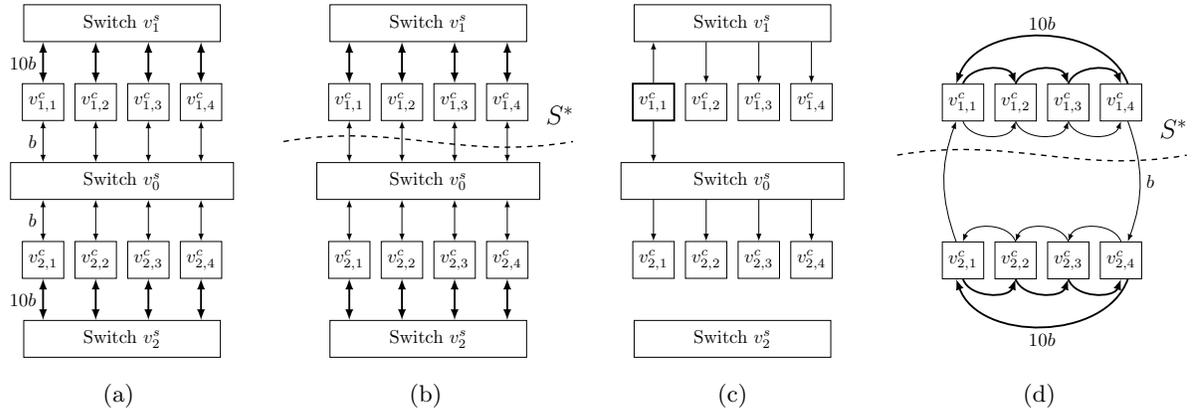

In this section, we discuss why we need a pipeline schedule instead of an ordinary step-based schedule to achieve bandwidth optimality. We show that in certain situations, pipeline schedule is \textit{the only possible way} to achieve bandwidth optimality. As shown in lower bound (\ref{eq:lowerbound}), the bandwidth performance of a topology is bounded by a bottleneck cut $(S^*,\overline{S^*})$. Suppose we want to achieve the performance bound given by the bottleneck cut, i.e. $(M/N)|S^*\cap V_c|/B_G^+(S^*)$, then the schedule must satisfy two requirements: (a) the bandwidth of the bottleneck cut, i.e. $B_G^+(S^*)$, must be saturated at all times, and (b) only the minimum amount of data required, i.e. $(M/N)|S^*\cap V_c|$, is transmitted through the bottleneck cut.\\

Consider the switch topology in figure \ref{fig:topo}. The topology has 8 compute nodes and 3 switch nodes. The eight compute nodes are in two clusters. Each cluster has a switch $v_1^s$ or $v_2^s$ providing $10b$ egress/ingress bandwidth for each compute node in the cluster. The 8 compute nodes are also connected to a global switch $v_0^s$, providing $b$ egress/ingress bandwidth for each compute node. It is easy to check that the bottleneck cut in this topology is a cluster cut $S^*=\{v_1^s,v_{1,1}^c,v_{1,2}^c,v_{1,3}^c,v_{1,4}^c\}$ shown in figure \ref{fig:topo_partition}. The cut has a runtime lower bound of $(M/N)(4/4b)$. In comparison, a single-compute-node cut has a much lower runtime lower bound $(M/N)(1/11b)$.\\

Suppose we want to achieve the lower bound by bottleneck cut $S^*$. Let $C$ be the last chunk sent through the cut to cluster 2, and suppose it is sent to $v_{2,1}^c$. The first thing to try is to saturate the bandwidth. It means that the schedule terminates right after $C$ is sent, leaving no idle time for $B_G^+(S^*)$. Then, at least one of $v_{2,2}^c,v_{2,3}^c,v_{2,4}^c$ must get $C$ directly from cluster 1 because they have no time to get it from $v_{2,1}^c$. This violates minimality, however, because chunk $C$ got sent through the bottleneck cut at least twice.\\

Suppose we want to achieve minimality. Then, $v_{2,1}^c$ has to broadcast $C$ to $v_{2,2}^c,v_{2,3}^c,v_{2,4}^c$ within the cluster. However, because $C$ is the last chunk sent through the cut by assumption, the cut bandwidth $B_G^+(S^*)$ is idle during the broadcast. The saturation requirement is violated. Thus, we are in a minimality-or-saturation dilemma that we cannot achieve both at the same time. However, we can do infinitely close by making chunk $C$ infinitesimally small. By doing so, we transmit minimum data required, and we also make the idle time of bottleneck cut close to 0. In step-based schedule, one always needs to specify $C$ as a fixed fraction of the total data, so it is impossible to achieve bandwidth optimality in such a case. In contrast, the size of one send/recv can be arbitrarily small in pipeline schedule. Therefore, pipeline schedule is the only way to achieve bandwidth optimality.

\section{Algorithm Design}

Let $G=(V=V_s\cup V_c,E)$ be an arbitrary network topology. We will compute an allgather pipeline schedule that reaches the lower bound (\ref{eq:lowerbound}) and is thus bandwidth optimal. We make two trivial assumptions about the topology: (a) all link bandwidths are integers and (b) $G$ is \textit{Eulerian} i.e. the total egress bandwidth equals the total ingress bandwidth for any node. For (a), when bandwidths are rational numbers, one can always scale them up to become integers. For (b), we use $B^+_G(v)$ and $B^-_G(v)$ to denote the total egress and ingress bandwidth of node $v$ respectively. Since $G$ is Eulerian, we have $B^+_G(v)=B^-_G(v)$ for all $v\in V$ and, consequently, $B^+_G(S)=B^-_G(S)$ for any $S\subseteq V$.\\

In summary, the algorithm contains three parts:
\begin{itemize}[leftmargin=*]
    \item \S\ref{sec:binarysearch}: Conduct a binary search to compute the lower bound (\ref{eq:lowerbound}), which is also the optimal bandwidth runtime. The binary search uses a network flow based oracle to test if a certain value is $\geq$ or $<$ than the true value of lower bound (\ref{eq:lowerbound}).
    \item \S\ref{sec:edgesplit}: Transform the switch topology into a direct-connect logical topology by using \textit{edge splitting} to remove switch nodes. The transformation is done without compromising optimal bandwidth performance. This part can be skipped if the input topology is already direct-connect.
    \item \S\ref{sec:constructtree}: Construct pipeline spanning trees in direct-connect topology to achieve optimal bandwidth performance. These spanning trees can then be mapped back to the original topology by reversing edge splitting, which determines the routing of communications between compute nodes.
\end{itemize}
The algorithm design is centered on earlier graph theoretical results on constructing edge-disjoint out-trees in directed graph~\cite{bang-jensen, tarjan, edmonds, berczi2010packing, schrijver}. A key observation leading to this algorithm is that \textit{given a set of out-trees, there are at most $U$ out-trees congested on any edge of $G$, if and only if, the set of out-trees is edge-disjoint in a multigraph topology obtained by duplicating each of $G$'s edges $U$ times.} \\

Another core design of our algorithm relies on \textit{edge splitting}, also a technique from graph theory~\cite{bang-jensen,frank,jackson}. It is used to transform the switch topology into a direct-connect topology so that one can construct compute-node-only pipeline spanning trees. Previous works such as TACCL~\cite{taccl} and TACOS~\cite{tacos} attempt to do this by ``unwinding'' switch topologies into predefined logical topologies, such as rings. However, their transformations often result in a loss of performance compared to the original switch topology. For example, the previous works may be tempted to unwind all switches in figure \ref{fig:topo} into rings, resulting in figure \ref{fig:topo_split2}. However, it makes the bottleneck cut $S^*$ worse that the egress bandwidth of $S^*$ becomes $b$ instead of $4b$, causing lower bound (\ref{eq:lowerbound}) being $(M/N)(4/b)$ (4x worse). In contrast, our \textit{edge splitting} strategically removes switch nodes without sacrificing any overall performance. Our transformation generates direct-connect topology in figure \ref{fig:topo_split}, which has the same runtime lower bound as figure \ref{fig:topo}. Furthermore, TACCL and TACOS are unable to handle topologies with multiple switches connected together like fat-tree~\cite{fattree} and dragonfly~\cite{dragonfly}, which are often the most popular ones in datacenters and high-performance computing.

\paragraph{Notation}In this paper, we make extensive use of network flow between different pairs of nodes. For any flow network $D$, we use $F(x,y;D)$ to denote the value of maxflow from $x$ to $y$ in $D$. For disjoint $A,B$, let $c(A,B;D)$ be the total capacity from $A$ to $B$ in $D$. By min-cut theorem, $F(x,y;D)\leq c(A,\bar{A};D)$ if $x\in A,y\in\bar{A}$, and there exists an $x$-$y$ cut $(A^*,\overline{A^*})$ that $F(x,y;D)=c(A^*,\overline{A^*};D)$.

\subsection{Optimality Binary Search}\label{sec:binarysearch}

In this section, we will show a way to compute the lower bound (\ref{eq:lowerbound}). Let $\{b_e\}_{e\in E}$ be the link bandwidths of $G$. By assumption, $\{b_e\}_{e\in E}$ are in $\Z_+$ and represented as capacities of edges in $G$. For any $x\in\Q$, we define $\vec{G}_x$ to be the flow network that (a) a source node $s$ is added and (b) an edge $(s,u)$ is added with capacity $x$ for every vertex $u\in V_c$. Now, we have the following theorem:

\begin{restatable}{theorem}{thmbinarysearch}\label{thm:binarysearch}
    $\min_{v\in V_c} F(s,v;\vec{G}_x)\geq|V_c|x$ if and only if $1/x\geq\max_{S\subset V,S\not\supseteq V_c}|S\cap V_c|/B^+_G(S)$.
\end{restatable}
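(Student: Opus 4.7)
The plan is to apply the max-flow min-cut theorem to $F(s,v;\vec{G}_x)$ and translate the resulting cut inequalities into the claimed statement. Concretely, an $s$-$v$ cut in $\vec{G}_x$ is determined by its source side $T$ with $s \in T$ and $v \notin T$, so writing $S = T \setminus \{s\} \subseteq V$ (with $v \notin S$), the edges crossing the cut split into two groups: the source edges $(s,u)$ with $u \in V_c \setminus S$, contributing $x \cdot |V_c \setminus S|$, and the original edges of $G$ leaving $S$, contributing $B_G^+(S)$. Thus every $s$-$v$ cut has capacity
\begin{equation*}
    x \cdot |V_c \setminus S| + B_G^+(S), \qquad S \subseteq V,\ v \notin S,
\end{equation*}
and by min-cut, $F(s,v;\vec{G}_x)$ equals the minimum of this expression over such $S$.

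Next I would rewrite $|V_c \setminus S| = |V_c| - |V_c \cap S|$ and observe that the inequality $F(s,v;\vec{G}_x) \geq |V_c|\, x$ is equivalent to requiring $B_G^+(S) \geq x \cdot |V_c \cap S|$ for every $S \subseteq V$ with $v \notin S$. Taking the minimum over $v \in V_c$ on the left corresponds exactly to quantifying over every $S \subseteq V$ such that at least one compute node lies outside $S$, i.e. $S \not\supseteq V_c$. Hence
\begin{equation*}
    \min_{v \in V_c} F(s,v;\vec{G}_x) \geq |V_c|\, x
    \iff
    B_G^+(S) \geq x \cdot |V_c \cap S| \text{ for all } S \subset V \text{ with } S \not\supseteq V_c.
\end{equation*}

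Finally, dividing through by $B_G^+(S)$ (which is strictly positive whenever $S \cap V_c \neq \emptyset$ and $S \not\supseteq V_c$, since otherwise allgather on $G$ would be infeasible) and taking the maximum over admissible $S$ turns the right-hand condition into $\max_{S \subset V,\, S \not\supseteq V_c} |S \cap V_c|/B_G^+(S) \leq 1/x$, which is the stated reformulation. Cases with $|S \cap V_c| = 0$ make both sides vacuous and do not affect the maximum.

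The argument is essentially bookkeeping once max-flow min-cut is invoked, so I do not expect a substantive obstacle; the only point requiring mild care is the bijection between $s$-$v$ cuts in $\vec{G}_x$ and subsets $S \subseteq V$ with $v \notin S$, and correctly separating the source-edge contribution from the original-edge contribution in the cut value. Once that decomposition is stated cleanly, both directions of the iff follow simultaneously from rearranging the inequality.
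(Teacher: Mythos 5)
Your proposal is correct and follows essentially the same route as the paper's proof: both apply max-flow min-cut and use the identical cut-capacity decomposition $c(A,\bar{A};\vec{G}_x)=B^+_G(S)+|V_c\setminus S|\,x$ with $S=A-s$, then rearrange to $B^+_G(S)\geq x\,|S\cap V_c|$ over all $S\not\supseteq V_c$. The only cosmetic difference is that you present it as one chain of equivalences while the paper argues the two directions separately.
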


The implication of theorem \ref{thm:binarysearch} is that we can do a binary search to get $1/x^*=\max_{S\subset V,S\not\supseteq V_c}|S\cap V_c|/B^+_G(S)$. The following initial range is trivial
\[
    \frac{N-1}{\min_{v\in V_c}B^-_G(v)}\leq\max_{S\subset V,S\not\supseteq V_c}\frac{|S\cap V_c|}{B^+_G(S)}\leq N-1.
\]
The lower bound corresponds to a partition containing all nodes except the compute node with minimum ingress bandwidth. The upper bound is due to the fact that $|S\cap V_c|\leq N-1$ and $B^+_G(S)\geq 1$. Starting with the initial range, one can then continuously test if $\min_{v\in V_c} F(s,v;\vec{G}_x)\geq|V_c|x$ for some midpoint $x$ to do a binary search. To find the exact $1/x^*$, let $S^*=\argmax_{S\subset V,S\not\supseteq V_c}|S\cap V_c|/B^+_G(S)$, then $1/x^*$ equals a fractional number with $B^+_G(S^*)$ as its denominator. Observe that $|S^*\cap V_c|\leq N-1$ and $|S^*\cap V_c|/B^+_G(S^*)\geq(N-1)/\min_{v\in V_c}B^-_G(v)$, so $B^+_G(S^*)\leq\min_{v\in V_c}B^-_G(v)$. Therefore, the denominator of $1/x^*$ is bounded by $\min_{v\in V_c}B^-_G(v)$.  Now, we use the following proposition:
\begin{restatable}{proposition}{lmfraction}\label{lm:fraction}
    Given two unequal fractional numbers $a/b$ and $c/d$ with $a,b,c,d\in\Z_+$, if denominators $b,d\leq X$ for some $X\in\Z_+$, then $|a/b-c/d|\geq 1/X^2$.
\end{restatable}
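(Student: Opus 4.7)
The plan is to put the two fractions over a common denominator and then use integrality of the numerator. Specifically, I would write
\[
    \left|\frac{a}{b} - \frac{c}{d}\right| = \frac{|ad - bc|}{bd},
\]
so that bounding the expression on the right reduces to separately bounding $|ad-bc|$ from below and $bd$ from above.

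For the numerator, since $a/b \neq c/d$ by hypothesis, we have $ad \neq bc$. Because $a,b,c,d$ are positive integers, $ad$ and $bc$ are integers, so $|ad - bc|$ is a positive integer and in particular $|ad - bc| \geq 1$. For the denominator, the hypothesis $b,d \leq X$ immediately gives $bd \leq X^2$. Combining these two observations yields $|a/b - c/d| \geq 1/(bd) \geq 1/X^2$, which is the desired inequality.

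The proof is essentially a one-line calculation, so there is no substantial obstacle; the only thing to verify carefully is that unequal fractions with positive integer entries do indeed give $|ad-bc|\geq 1$, which follows purely from integrality. No results from elsewhere in the paper are needed.
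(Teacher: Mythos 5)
Your proof is correct and is essentially the same argument as the paper's: put the fractions over the common denominator $bd$, use integrality to get $|ad-bc|\geq 1$, and bound $bd\leq X^2$. No differences worth noting.
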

The proposition implies that if $1/x^*=a/b$ for some $b\leq\min_{v\in V_c}B^-_G(v)$, then any $c/d\neq 1/x^*$ with $d\leq\min_{v\in V_c}B^-_G(v)$ satisfies $|c/d-1/x^*|\geq 1/\min_{v\in V_c}B^-_G(v)^2$. Thus, one can run binary search until the range is smaller than $1/\min_{v\in V_c}B^-_G(v)^2$. Then, $1/x^*$ can be computed exactly by finding the fractional number closest to the midpoint with a denominator not exceeding $\min_{v\in V_c}B^-_G(v)$. The latter can be done with the continued fraction algorithm or brute force search if $\min_{v\in V_c}B^-_G(v)$ is small.\\

At the point, we have already known the optimality of bandwidth runtime given a topology $G$. For the remainder of this section, we will show that there exists a family of spanning trees that achieves this optimality. First of all, we have assumed that $G$'s links have the set of bandwidths $\{b_e\}_{e\in E}$. For the simplicity of notation, we use $G(\{c_e\})$ to denote the same topology as $G$ but with the set of bandwidths $\{c_e\}_{e\in E}$ instead. $\vec{G}_x(\{c_e\})$ is also defined accordingly. When $\{c_e\}_{e\in E}$ are integers, we say a family of out-trees $\cF$ is \textit{edge-disjoint} in $G(\{c_e\})$ if the number of trees using any edge $e\in E$ is less than or equal to $c_e$ i.e. $\sum_{T\in\cF}\I[e\in T]\leq c_e$ for all $e\in E$. The intuition behind this edge-disjointness is that \textit{the integer capacity $c_e$ represents the number of multiedges from the tail to the head of $e$.}\\

Now, we find $U\in\Q,k\in\N$ such that $U/k=1/x^*$ and $Ub_e\in\Z_+$ for all $e\in E$. For simplicity of schedule, we want $k$ to be as small as possible. The following proposition shows how to find such $U,k$:
\begin{restatable}{proposition}{lmsmallestUk}\label{lm:smallestUk}
     Given $\{b_e\}_{e\in E}\subset\Z_+$ and $1/x^*\in\Q$, let $p/q$ be the simplest fractional representation of $1/x^*$ i.e. $p/q=1/x^*$ and $\gcd(p,q)=1$. Suppose $k\in\N$ is the smallest such that there exists $U\in\Q$ satisfying $U/k=1/x^*$ and $Ub_e\in\Z_+$ for all $e\in E$, then $U=p/\gcd(q,\{b_e\}_{e\in E})$ and $k=Ux^*$.
\end{restatable}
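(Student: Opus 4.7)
The plan is to first rewrite the constraint in purely number-theoretic form, then show that the proposed $U=p/d$, $k=q/d$ (where $d:=\gcd(q,\{b_e\}_{e\in E})$) both satisfies the constraint and is minimal, the minimality coming from a prime-by-prime valuation argument.

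First I would observe that the constraint $U/k=1/x^*=p/q$ forces $U=kp/q$, so the only real requirement is that $k\in\N$ be chosen so that $kpb_e/q\in\Z_+$ for every $e\in E$. Since $\gcd(p,q)=1$, this is equivalent to the divisibility condition $q\mid kb_e$ for all $e\in E$. Thus the proposition reduces to showing that the smallest positive integer $k$ with $q\mid kb_e$ for every $e$ is exactly $q/d$, where $d=\gcd(q,\{b_e\}_{e\in E})$.

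For the constructive direction, I would verify that $k=q/d$ works: since $d\mid q$ we have $k\in\N$, and since $d\mid b_e$ the quantity $Ub_e=(p/d)b_e=p(b_e/d)$ is a positive integer. This also shows $U\in\Q_+$, and from $U/k=(p/d)/(q/d)=p/q=1/x^*$ the defining relation holds; moreover $Ux^*=(p/d)(q/p)=q/d=k$, as claimed.

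For the minimality direction, I would argue prime-by-prime. For any prime $\ell$ and any integer $n$ let $v_\ell(n)$ denote its $\ell$-adic valuation. The condition $q\mid kb_e$ is equivalent to $v_\ell(k)\geq v_\ell(q)-v_\ell(b_e)$ for every prime $\ell$ and every $e$, i.e.\ $v_\ell(k)\geq\max\bigl(0,\,v_\ell(q)-\min_{e}v_\ell(b_e)\bigr)=\max\bigl(0,\,v_\ell(q)-v_\ell(\gcd_e b_e)\bigr)$. A short computation gives $v_\ell(q/d)=v_\ell(q)-\min(v_\ell(q),v_\ell(\gcd_e b_e))=\max(0,v_\ell(q)-v_\ell(\gcd_e b_e))$, so the valuation-wise lower bound on any admissible $k$ matches $q/d$ exactly, proving $k\geq q/d$ and hence $k=q/d$. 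The only mildly delicate step is this valuation bookkeeping, in particular making sure that taking the maximum over $e$ commutes correctly with the truncation at $0$; everything else is direct substitution.
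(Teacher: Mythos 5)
Your proof is correct. The core reduction is the same as the paper's: eliminate $U$ via $U=kp/q$, use $\gcd(p,q)=1$ to pass to the divisibility condition $q\mid k b_e$, and identify the smallest admissible $k$ as $q/\gcd(q,\{b_e\}_{e\in E})$. The difference is one of bookkeeping: the paper writes $U=\alpha/\beta$ in lowest terms, derives $p\mid\alpha$, $\beta\mid q$, and $\beta\mid b_e$, and then asserts that taking $\alpha=p$ and $\beta=\gcd(q,\{b_e\})$ minimizes $k=\alpha q/(\beta p)$; you instead parametrize by $k$ directly and verify minimality prime-by-prime via $\ell$-adic valuations. Your route has the advantage of making the minimality step completely explicit (in particular it sidesteps having to check that the choices $\alpha=p$, $\beta=\gcd(q,\{b_e\})$ are jointly achievable with $\gcd(\alpha,\beta)=1$, which the paper leaves implicit), at the cost of a slightly longer argument. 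Both are elementary and sound.
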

In figure \ref{fig:topo}'s example, we have $1/x^*=|S^*\cap V_c|/B_G^+(S^*)=4/4b=1/b$ and thus $U=1/b,k=1$.\\

Consider the digraph $G(\{Ub_e\})$. Each edge of $G(\{Ub_e\})$ has integer capacity. We will show that there exists a family of edge-disjoint out-trees $\{T_{u,i}\}_{u\in V_c,i\in[k]}$ in $G(\{Ub_e\})$ with $T_{u,i}$ rooted at $u$ and $\cV(T_{u,i})\supseteq V_c$. Here, $[k]=\{1,2,\dots,k\}$ and $\cV(T_{u,i})$ denotes the vertex set of $T_{u,i}$. We use the following theorem proven by Bang-Jensen et al.~\cite{bang-jensen}:
\begin{restatable}[Bang-Jensen et al.~\cite{bang-jensen}]{theorem}{thmbangjensentrees}\label{thm:bang-jensen}
    Let $n\geq 1$ and $D=(V,E)$ be a digraph with a special node $s$. Let $T'=\{v\ |\ v\in V-s,d^-(v)<d^+(v)\}$. If $\lambda(s,v;D)\geq n$ for all $v\in T'$, then there is a family $\cF$ of edge-disjoint out-trees rooted at $s$ such that every $v\in V$ belongs to at least $\min(n,\lambda(s,v;D))$ number of out-trees.
\end{restatable}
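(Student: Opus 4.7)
The plan is to reduce the theorem to Edmonds' classical spanning-arborescence packing theorem via directed edge-splitting. Edmonds' theorem requires $\lambda(s,v;D)\ge n$ for \emph{every} $v\in V-s$, whereas the present hypothesis imposes this bound only on $T'$. The core intuition is that a vertex $v\notin T'$ has $d^-(v)\ge d^+(v)$ and therefore carries ``surplus'' incoming capacity that can be absorbed into the rest of the graph without hurting any connectivity directed toward $T'$.

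First I would handle the restricted goal of covering only $T'$. For each $v\in V-s-T'$, prune the $d^-(v)-d^+(v)$ excess in-edges; these edges can never lie in a minimum $s$-$x$ cut for $x\neq v$, so pruning preserves $\lambda(s,x;D)$ for all such $x$. Afterward $d^-(v)=d^+(v)$, so the directed Mader-type splitting theorem (in the form due to Frank and Jackson) guarantees a \emph{complete splitting} at $v$: pair each in-edge $(u,v)$ with a distinct out-edge $(v,w)$ and replace the pair by a single edge $(u,w)$ in such a way that $\lambda(s,x;\cdot)$ is preserved for every $x\in V-s-v$. Iterating over all $v\notin T'$ produces a digraph $D^\star$ on $\{s\}\cup T'$ with $\lambda(s,v;D^\star)\ge n$ for every $v\in T'$. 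Apply Edmonds' theorem to $D^\star$ to obtain $n$ edge-disjoint spanning out-arborescences, then lift each arborescence back to $D$ by substituting, for every split-off edge $(u,w)$, its original 2-path $u\to v\to w$ through the eliminated vertex (applied recursively when several splittings are stacked). The lifts are edge-disjoint out-trees in $D$ rooted at $s$ that cover every vertex of $T'$.

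The remaining and hardest step is to strengthen the construction so that each $v\notin T'$ also lies in at least $k_v:=\min(n,\lambda(s,v;D))$ of the trees. For this I would perform only a \emph{partial} splitting at each such $v$, retaining $k_v$ in-edges and $k_v$ out-edges intact so that $v$ remains in the graph and is available to be traversed internally by $k_v$ trees. Mader's theorem still supplies a valid partial splitting preserving the required $\lambda(s,x;\cdot)$ on $T'$. What must then be shown is that, when Edmonds' theorem is applied to the resulting reduced digraph, the $n$ arborescences can be chosen so that, simultaneously for every $v\notin T'$, exactly $k_v$ of them route through $v$. This simultaneous degree bookkeeping across all non-$T'$ vertices is the main obstacle; handling it seems to require replacing bare Edmonds with the degree-specified arborescence packing theorem of Frank, or equivalently a supermodular/matroid-intersection argument in which $k_v$ is enforced as an in-degree constraint at $v$. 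Once the bookkeeping is consistent, the standard lifting yields the desired family $\cF$.
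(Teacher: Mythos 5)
The paper does not prove Theorem~\ref{thm:bang-jensen}; it is imported verbatim from Bang-Jensen et al.~\cite{bang-jensen} and used as a black box, so there is no in-paper argument to compare against. Judged on its own terms, your proposal has two substantive problems.

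First, the pruning step is wrong. You assert that the $d^-(v)-d^+(v)$ excess in-edges of a vertex $v\notin T'$ ``can never lie in a minimum $s$-$x$ cut for $x\neq v$,'' but this is false. Take $V=\{s,a,b,v,x\}$ with edges $s\to a$, $s\to b$, $a\to v$, $b\to v$, $v\to x$, $a\to x$. Here $d^-(v)=2>1=d^+(v)$, so $v\notin T'$, yet $(b,v)$ is the unique edge of $b$ crossing the cut $(\{s,b\},\{a,v,x\})$, which has value $2=\lambda(s,x;D)$; deleting $(b,v)$ drops $\lambda(s,x;D)$ to $1$. The safe operation is the opposite: \emph{add} $d^-(v)-d^+(v)$ parallel edges $(v,s)$ back to the root. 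Since an edge with head $s$ can never cross an $s$-$x$ cut, this preserves every $\lambda(s,x;D)$ while making $v$ balanced. But even with that fix the digraph is Eulerian only at the vertices outside $T'$; the vertices of $T'$ still satisfy $d^-(v)<d^+(v)$, so the Eulerian splitting theorem you invoke (in the form of Theorem~\ref{thm:BJedgesplitting}) does not apply to the graph as a whole, and you would need a more local Mader/Frank-type splitting statement at the single vertex being eliminated, with its own hypotheses to verify.

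Second, you explicitly flag the central claim of the theorem — that each $v$ with $\lambda(s,v;D)<n$ lies in at least $\lambda(s,v;D)$ of the trees — as an unresolved ``bookkeeping'' obstacle and gesture at a degree-constrained packing theorem without closing the gap. That is exactly the content that distinguishes this theorem from Edmonds' theorem, so as written the proposal reduces the easy part and leaves the hard part open. A working route would have to handle the coverage counts $\min(n,\lambda(s,v;D))$ intrinsically (Bang-Jensen et al.\ do this by a direct edge-by-edge induction rather than by splitting down to Edmonds), not as an afterthought.
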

Because we see integer capacity as the number of multiedges, here, the total in-degree $d^-(v)$ and out-degree $d^+(v)$ are simply the total ingress and egress capacity of $v$ in $G(\{Ub_e\})$. $\lambda(x,y;D)$ denotes the edge-connectivity from $x$ to $y$ in $D$ i.e. $\lambda(x,y;D)=\min_{x\in A,y\in\bar{A}}c(A,\bar{A};D)$. By min-cut theorem, $\lambda(x,y;D)$ is also equal to the maxflow from $x$ to $y$. Theorem \ref{thm:bang-jensen} leads to the following:
\begin{restatable}{theorem}{thmexisttrees}\label{thm:existtrees}
    Given integer-capacity digraph $D=(V_s\cup V_c,E)$ and $k\in\N$, there exists a family of edge-disjoint out-trees $\{T_{u,i}\}_{u\in V_c,i\in[k]}$ in $D$ with $T_{u,i}$ rooted at $u$ and $\cV(T_{u,i})\supseteq V_c$ if and only if \mbox{$\min_{v\in V_c}F(s,v;\vec{D}_k)\geq|V_c|k$.}
\end{restatable}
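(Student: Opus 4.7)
For the forward direction, I would exhibit an explicit $s$-to-$v$ flow of value $|V_c|k$ in $\vec{D}_k$ for each $v \in V_c$. Inside each $T_{u,i}$ there is a unique directed $u$-to-$v$ path; prepending the edge $(s,u)$ yields an $s$-to-$v$ path in $\vec{D}_k$. Over all pairs $(u,i)$ these $|V_c|k$ paths are edge-disjoint in $D$ (by edge-disjointness of the family) and use each edge $(s,u)$ exactly $k$ times, matching its capacity. Hence the collection is a feasible unit flow of value $|V_c|k$, giving $F(s,v;\vec{D}_k)\geq|V_c|k$.

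For the reverse direction, my plan is to apply Bang-Jensen's theorem (Theorem \ref{thm:bang-jensen}) to $\vec{D}_k$ with $n=|V_c|k$. To check its hypothesis I would compute, for every $v\in V$, the quantity $d^-_{\vec{D}_k}(v)-d^+_{\vec{D}_k}(v)=d^-_D(v)-d^+_D(v)+k\cdot\I[v\in V_c]$. Under the paper's standing Eulerian assumption on the underlying topology (inherited by $D$), this is nonnegative for every $v\in V$, so $T'=\emptyset$ in $\vec{D}_k$ and Bang-Jensen's precondition is vacuously satisfied. For each $v\in V_c$ the min-cut/max-flow identity gives $\lambda(s,v;\vec{D}_k)=F(s,v;\vec{D}_k)\geq|V_c|k=n$. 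Bang-Jensen then produces a family $\cF$ of edge-disjoint out-trees rooted at $s$ in $\vec{D}_k$ in which each $v\in V_c$ appears in at least $|V_c|k$ trees.

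I would then extract $\{T_{u,i}\}$ by a tight double-counting argument. Every tree in $\cF$ uses at least one edge out of $s$, whose total out-capacity in $\vec{D}_k$ is $|V_c|k$, so $|\cF|\leq|V_c|k$. Since some $v\in V_c$ lies in $|V_c|k$ trees, $|\cF|\geq|V_c|k$ as well. Equality forces (i) every tree in $\cF$ has $s$ with exactly one child, which is necessarily some $u\in V_c$ (since in $\vec{D}_k$ the out-neighbors of $s$ lie in $V_c$), (ii) every $v\in V_c$ lies in every tree, so every tree spans $V_c$, and (iii) for each $u\in V_c$ exactly $k$ trees use the edge $(s,u)$, because the counts $x_u\leq k$ must sum to $|V_c|k$. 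Deleting $s$ from each tree leaves an out-tree in $D$ rooted at its unique $s$-child $u\in V_c$ and containing all of $V_c$; regrouping into $k$ trees per $u$ and re-indexing delivers $\{T_{u,i}\}_{u\in V_c,i\in[k]}$. Edge-disjointness in $D$ is inherited from edge-disjointness of $\cF$ in $\vec{D}_k$ since the only removed edges are those incident to $s$, which are absent from $D$.

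The main obstacle I anticipate is the verification $T'=\emptyset$ in $\vec{D}_k$. The clean argument above leans on the paper's global Eulerian assumption on $G$. Without it, a switch $w\in V_s$ with $d^+_D(w)>d^-_D(w)$ could sit in $T'$ while $\lambda(s,w;\vec{D}_k)$ is as small as $d^-_D(w)<n$, so Bang-Jensen would not apply directly and one would need a preprocessing step (trimming wasted out-capacity at such switches without reducing $F(s,\cdot\,;\vec{D}_k)$ at any compute node). Within the paper's setting, where $D$ arises as $G(\{Ub_e\})$ with $G$ Eulerian, this complication does not arise and the plan proceeds unobstructed.
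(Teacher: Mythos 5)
Your proposal is correct and follows essentially the same route as the paper: the same explicit path-packing flow argument for the forward direction, and for the reverse the same application of Theorem \ref{thm:bang-jensen} to $\vec{D}_k$ followed by the counting argument on $d^+(s)=|V_c|k$ and $c(s,u;\vec{D}_k)=k$ to force exactly $|V_c|k$ spanning trees with one $s$-child each. (Your determination that $T'=\emptyset$ in $\vec{D}_k$ under the Eulerian assumption is in fact more careful than the paper, which asserts $T'=V_c$; either way the hypothesis of Theorem \ref{thm:bang-jensen} holds and the conclusion used is identical.)
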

Consider the flow network $\vec{G}_k(\{Ub_e\})$. It is trivial to see that each edge in $\vec{G}_k(\{Ub_e\})$ has exactly $U$ times the capacity as in $\vec{G}_{x^*}$, including the edges incident from $s$. Thus, we have
\[
    \min_{v\in V_c} F(s,v;\vec{G}_k(\{Ub_e\}))=U\cdot\min_{v\in V_c}F(s,v;\vec{G}_{x^*})\geq U\cdot|V_c|x^*=|V_c|k.
\]
By theorem \ref{thm:existtrees}, there exists a family of edge-disjoint out-trees $\{T_{u,i}\}_{u\in V_c,i\in[k]}$ in $G(\{Ub_e\})$ with $T_{u,i}$ rooted at $u$ and $\cV(T_{u,i})\supseteq V_c$. Observe that for any edge $e \in E$, at most $Ub_e$ number of trees from $\{T_{u,i}\}_{u\in V_c,i\in[k]}$ use edge $e$. For allgather, we make each tree broadcast $1/k$ of the root's data shard, then the bandwidth runtime is
\[
    T_B\leq\max_{e\in E}\frac{M}{Nk}\cdot\frac{Ub_e}{b_e}=\frac{M}{N}\cdot\frac{U}{k}=\frac{M}{N}\cdot\frac{1}{x^*}=\frac{M}{N}\max_{S\subset V,S\not\supseteq V_c}\frac{|S\cap V_c|}{B^+_G(S)}
\]
reaching the lower bound (\ref{eq:lowerbound}) given topology $G$.\\

At this point, one may be tempted to construct and use $\{T_{u,i}\}_{u\in V_c,i\in[k]}$ to perform allgather. However, because $T_{u,i}$ can be arbitrary tree in $G(\{Ub_e\})$, it may force switch nodes to broadcast like $v_0^s,v_1^s$ in figure \ref{fig:topo_tree}. In the following section, we introduce a way to remove switch nodes from $G(\{Ub_e\})$, while preserving the existence of out-trees with the same bandwidth runtime. Afterward, we construct out-trees in the compute-node-only topology and map the communications back to $G(\{Ub_e\})$. Thus, we are able to construct a pipeline schedule with the same optimal bandwidth performance but without switch-node broadcast.

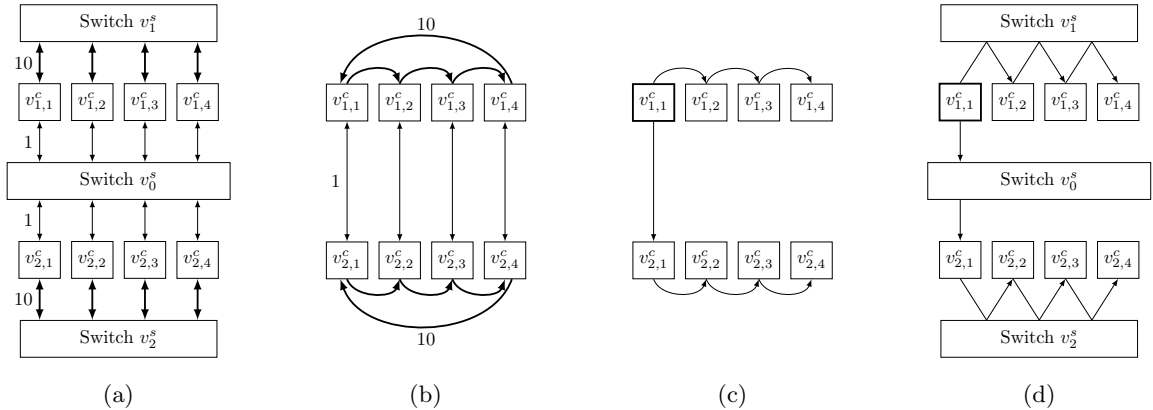
\begin{figure*}[t]
    \centering
    \begin{subfigure}{0.24\textwidth}
        \centering
        \scalebox{0.7}{
\begin{tikzpicture}[node/.style={rectangle,draw=black,minimum size=7mm,align=center}]
    \node[node]	(10)	at (-1.5,2.5-1) {$v^c_{1,1}$};
    \node[node]	(11)	at (-0.5,2.5-1) {$v^c_{1,2}$};
    \node[node]	(12)	at (0.5,2.5-1) {$v^c_{1,3}$};
    \node[node]	(13)	at (1.5,2.5-1) {$v^c_{1,4}$};
    \node[node,text width=3.5cm]	(14)	at (0,0.5+2.5) {Switch $v^s_1$};
    
    \path[latex-latex,anchor=east,line width=1pt] (10.north) edge node {$10$} (10.north|-14.south);
    \path[latex-latex,line width=1pt] (11.north) edge (11.north|-14.south);
    \path[latex-latex,line width=1pt] (12.north) edge (12.north|-14.south);
    \path[latex-latex,line width=1pt] (13.north) edge (13.north|-14.south);
    
    \node[node]	(0)	at (-1.5,1-2.5) {$v^c_{2,1}$};
    \node[node]	(1)	at (-0.5,1-2.5) {$v^c_{2,2}$};
    \node[node]	(2)	at (0.5,1-2.5) {$v^c_{2,3}$};
    \node[node]	(3)	at (1.5,1-2.5) {$v^c_{2,4}$};
    \node[node,text width=3.5cm]	(4)	at (0,-0.5-2.5) {Switch $v^s_2$};
    
    \path[latex-latex,anchor=east,line width=1pt] (0.south) edge node {$10$} (0.south|-4.north);
    \path[latex-latex,line width=1pt] (1.south) edge (1.south|-4.north);
    \path[latex-latex,line width=1pt] (2.south) edge (2.south|-4.north);
    \path[latex-latex,line width=1pt] (3.south) edge (3.south|-4.north);
            
    \node[node,text width=4cm]	(20)	at (0,0) {Switch $v^s_0$};
    
    \path[latex-latex,anchor=east] (0.north) edge node {$1$} (0.north|-20.south);
    \path[latex-latex] (1.north) edge (1.north|-20.south);
    \path[latex-latex] (2.north) edge (2.north|-20.south);
    \path[latex-latex] (3.north) edge (3.north|-20.south);
    
    \path[latex-latex,anchor=east] (10.south) edge node {$1$} (10.south|-20.north);
    \path[latex-latex] (11.south) edge (11.south|-20.north);
    \path[latex-latex] (12.south) edge (12.south|-20.north);
    \path[latex-latex] (13.south) edge (13.south|-20.north);
\end{tikzpicture}
}
        \caption{}
        \label{fig:topo2}
    \end{subfigure}
    \begin{subfigure}{0.24\textwidth}
        \centering
        \scalebox{0.7}{
\begin{tikzpicture}[node/.style={rectangle,draw=black,minimum size=7mm,align=center}]
    \node[node]	(10)	at (-1.5,2.5-1) {$v^c_{1,1}$};
    \node[node]	(11)	at (-0.5,2.5-1) {$v^c_{1,2}$};
    \node[node]	(12)	at (0.5,2.5-1) {$v^c_{1,3}$};
    \node[node]	(13)	at (1.5,2.5-1) {$v^c_{1,4}$};

    \node[node,text width=3.5cm,draw=none]	(14)	at (0,0.5+2.5) {};

    \draw[-latex,line width=1pt] (10.north) to[out=70, in=110] (11.north);
    \draw[-latex,line width=1pt] (11.north) to[out=70, in=110] (12.north);
    \draw[-latex,line width=1pt] (12.north) to[out=70, in=110] (13.north);
    \draw[-latex, anchor=south,line width=1pt] (13.70) to[out=110, in=70] node {$10$} (10.110);
    
    \node[node]	(0)	at (-1.5,1-2.5) {$v^c_{2,1}$};
    \node[node]	(1)	at (-0.5,1-2.5) {$v^c_{2,2}$};
    \node[node]	(2)	at (0.5,1-2.5) {$v^c_{2,3}$};
    \node[node]	(3)	at (1.5,1-2.5) {$v^c_{2,4}$};
    
    \node[node,text width=3.5cm,draw=none]	(4)	at (0,-0.5-2.5) {};
    
    \draw[-latex,line width=1pt] (0.south) to[out=-70, in=-110] (1.south);
    \draw[-latex,line width=1pt] (1.south) to[out=-70, in=-110] (2.south);
    \draw[-latex,line width=1pt] (2.south) to[out=-70, in=-110] (3.south);
    \draw[-latex, anchor=north,line width=1pt] (3.-70) to[out=-110, in=-70] node {$10$} (0.-110);
    
    \path[latex-latex, anchor=east] (0.north) edge node {$1$} (0.north|-10.south);
    \path[latex-latex] (1.north) edge (1.north|-11.south);
    \path[latex-latex] (2.north) edge (2.north|-12.south);
    \path[latex-latex] (3.north) edge (3.north|-13.south);
\end{tikzpicture}
}
        \caption{}
        \label{fig:topo_split}
    \end{subfigure}
    \begin{subfigure}{0.24\textwidth}
        \centering
        \scalebox{0.7}{
\begin{tikzpicture}[node/.style={rectangle,draw=black,minimum size=7mm,align=center}]
    \node[node, line width=1pt]	(10)	at (-1.5,2.5-1) {$v^c_{1,1}$};
    \node[node]	(11)	at (-0.5,2.5-1) {$v^c_{1,2}$};
    \node[node]	(12)	at (0.5,2.5-1) {$v^c_{1,3}$};
    \node[node]	(13)	at (1.5,2.5-1) {$v^c_{1,4}$};

    \node[node,text width=3.5cm,draw=none]	(14)	at (0,0.5+2.5) {};

    \draw[-latex] (10.north) to[out=70, in=110] (11.north);
    \draw[-latex] (11.north) to[out=70, in=110] (12.north);
    \draw[-latex] (12.north) to[out=70, in=110] (13.north);
    \draw[-latex, draw=none] (13.70) to[out=110, in=70] (10.110);
    
    \node[node]	(0)	at (-1.5,1-2.5) {$v^c_{2,1}$};
    \node[node]	(1)	at (-0.5,1-2.5) {$v^c_{2,2}$};
    \node[node]	(2)	at (0.5,1-2.5) {$v^c_{2,3}$};
    \node[node]	(3)	at (1.5,1-2.5) {$v^c_{2,4}$};

    \node[node,text width=3.5cm,draw=none]	(4)	at (0,-0.5-2.5) {};
    
    \draw[-latex] (0.south) to[out=-70, in=-110] (1.south);
    \draw[-latex] (1.south) to[out=-70, in=-110] (2.south);
    \draw[-latex] (2.south) to[out=-70, in=-110] (3.south);
    \draw[-latex, draw=none] (3.-70) to[out=-110, in=-70] (0.-110);
    
    \path[latex-] (0.north) edge (0.north|-10.south);
\end{tikzpicture}
}
        \caption{}
        \label{fig:topo_split_tree}
    \end{subfigure}
    \begin{subfigure}{0.24\textwidth}
        \centering
        \scalebox{0.7}{
\begin{tikzpicture}[node/.style={rectangle,draw=black,minimum size=7mm,align=center}]
    \node[node, line width=1pt]	(10)	at (-1.5,2.5-1) {$v^c_{1,1}$};
    \node[node]	(11)	at (-0.5,2.5-1) {$v^c_{1,2}$};
    \node[node]	(12)	at (0.5,2.5-1) {$v^c_{1,3}$};
    \node[node]	(13)	at (1.5,2.5-1) {$v^c_{1,4}$};
    \node[node,text width=3.5cm]	(14)	at (0,0.5+2.5) {Switch $v^s_1$};
    
    \draw (10.north) -- ($(14.south -| -1, 0)$);
    \draw[-latex] ($(14.south -| -1, 0)$) -- (11.north);
    
    \draw (11.north) -- ($(14.south -| 0, 0)$);
    \draw[-latex] ($(14.south -| 0, 0)$) -- (12.north);
    
    \draw (12.north) -- ($(14.south -| 1, 0)$);
    \draw[-latex] ($(14.south -| 1, 0)$) -- (13.north);

    \node[node]	(0)	at (-1.5,1-2.5) {$v^c_{2,1}$};
    \node[node]	(1)	at (-0.5,1-2.5) {$v^c_{2,2}$};
    \node[node]	(2)	at (0.5,1-2.5) {$v^c_{2,3}$};
    \node[node]	(3)	at (1.5,1-2.5) {$v^c_{2,4}$};
    \node[node,text width=3.5cm]	(4)	at (0,-0.5-2.5) {Switch $v^s_2$};
            
    \node[node,text width=4cm]	(20)	at (0,0) {Switch $v^s_0$};
    
    \path[latex-] (0.north) edge (0.north|-20.south);
    \path[-latex] (10.south) edge (10.south|-20.north);
    
    \draw (0.south) -- ($(4.north -| -1, 0)$);
    \draw[-latex] ($(4.north -| -1, 0)$) -- (1.south);
    
    \draw (1.south) -- ($(4.north -| 0, 0)$);
    \draw[-latex] ($(4.north -| 0, 0)$) -- (2.south);
    
    \draw (2.south) -- ($(4.north -| 1, 0)$);
    \draw[-latex] ($(4.north -| 1, 0)$) -- (3.south);
\end{tikzpicture}
}
        \caption{}
        \label{fig:topo_tree2}
    \end{subfigure}
    \caption{Different stages of the topology in schedule construction. Figure (a) shows the topology of $G(\{Ub_e\})$. Note that the link capacities no longer have $b$ as a multiplier. Figure (b) shows the topology $G^*$ after edge splitting removes all switch nodes. Figure (c) shows a pipeline spanning tree constructed in $G^*$. Figure (d) shows the routings in $G$ corresponding to the spanning tree.}
    \label{fig:states}
\end{figure*}

\subsection{Edge Splitting}\label{sec:edgesplit}

\begin{algorithm}[tb]
    \caption{Remove Switch Nodes}
    \label{algo:removeswitch}
    \SetAlgoLined
    \DontPrintSemicolon
    \KwIn{Integer-capacity Eulerian digraph $D=(V_s\cup V_c,E)$ and $k\in\N$.}
    \KwOut{Direct-connect digraph $D^*=(V_c,E^*)$ and path recovery table \emap.}
    \Begin{
        Initialize table \emap\;
        \ForEach{\normalfont switch node $w\in V_s$}{
            \ForEach{\normalfont egress edge $f=(w,t)\in E$}{
                \ForEach{\normalfont ingress edge $e=(u,w)\in E$}{
                    % \tcp{One may prioritize some edge with $u,t$ in distinct clusters.}
                    Compute $M$ as in (\ref{eq:maxsplitoff}).\;
                    \If{$M>0$}{
                        Decrease $f$'s and $e$'s capacity by $M$. Remove $e$ if its capacity reaches 0.\;
                        Increase capacity of $(u,t)$ by $M$. Add the edge if $(u,t)\notin E$.\;
                        $\emap[(u,t)][w]\leftarrow\emap[(u,t)][w]+M$\;
                        \lIf{\normalfont $f$'s capacity reaches 0}{\textbf{break}}
                    }
                }
                \tcp{Edge $f$ should have 0 capacity at this point.}
                Remove edge $f$ from $D$.\;
            }
            \tcp{Node $w$ should be isolated at this point.}
            Remove node $w$ from $D$.\;
        }
        \Return{\normalfont the latest $D$ as $D^*$ and table \emap}
    }
\end{algorithm}

To remove the switch nodes from $G(\{Ub_e\})$, we apply a technique called \textit{edge splitting}. Consider a vertex $w$ and two incident edges $(u,w),(w,t)$. The operation of edge splitting is to replace $(u,w),(w,t)$ by a direct edge $(u,t)$ while maintaining edge-connectivities in the graph. In our context, $w$ is a switch node. We continuously split off one capacity of an incoming edge to $w$ and one capacity of an outgoing edge from $w$ until $w$ is isolated and can be removed from the graph. Because the edge-connectivities are maintained, we are able to show that $\min_{v\in V_c} F(s,v;\vec{G}_k(\{Ub_e\}))\geq|V_c|k$ is maintained in the process. Thus, by theorem \ref{thm:existtrees}, the existence of spanning trees with the same optimal bandwidth performance is also preserved.\\

We start with the following theorem from Bang-Jensen et al.~\cite{bang-jensen}. The theorem was originally proven by Frank~\cite{frank} and Jackson~\cite{jackson}.
\begin{restatable}[Bang-Jensen et al.~\cite{bang-jensen}]{theorem}{thmbangjensenedgesplit}\label{thm:BJedgesplitting}
    Let $D=(V+w,E)$ be a directed Eulerian graph, that is, $d^-(x)=d^+(x)$ for every node $x$ of $D$. Then, for every edge $f=(w,t)$ there is an edge $e=(u,w)$ such that $\lambda(x,y;D^{ef})=\lambda(x,y;D)$ for every $x,y\in V$, where $D^{ef}$ is the resulting graph obtained by splitting off $e$ and $f$ in $D$.
\end{restatable}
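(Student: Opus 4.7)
The plan is to prove the theorem by contradiction, exploiting submodularity of the directed cut function together with the Eulerian hypothesis. Fix $f = (w, t)$ and call a pair $(e, f)$ with $e = (u, w)$ \emph{admissible} if $\lambda(x, y; D^{ef}) = \lambda(x, y; D)$ for every $x, y \in V$.

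First I would carry out an elementary case analysis on how the splitting affects directed cuts. Writing $\Delta(X) := d^+(X; D^{ef}) - d^+(X; D)$ and expanding over the three modified edges $(u, w), (w, t), (u, t)$, one checks $\Delta(X) = 0$ except in two configurations where $\Delta(X) = -1$: \emph{type (a)} with $w \in X$ and $u, t \notin X$, and \emph{type (b)} with $w \notin X$ and $u, t \in X$. Hence $(e, f)$ is inadmissible precisely when some $\lambda$-tight cut $X$ separating a pair $x, y \in V$ falls in type (a) or (b) relative to $u$; call any such $X$ a \emph{dangerous set for $u$}. Because complementing $X$ in $V + w$ swaps the two types while leaving $d^+(X)$ unchanged (Eulerianity gives $d^+(X) = d^-(X) = d^+(\bar X)$ and, as a consequence, $\lambda(x, y) = \lambda(y, x)$), I would without loss of generality restrict to type-(b) witnesses.

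Next, I would assume for contradiction that no $e = (u, w)$ of positive multiplicity is admissible and, for each such $e$, fix a type-(b) dangerous set $X_e$ (so $t \in X_e$, $u \in X_e$, $w \notin X_e$, and $d^+(X_e; D) = \lambda(x_e, y_e; D)$ for some $x_e \in X_e, y_e \in V \setminus X_e$). The core step is the standard submodular uncrossing on $d^+$: from $d^+(A) + d^+(B) \geq d^+(A \cap B) + d^+(A \cup B)$, combined with the observation that $A \cup B$ still separates an appropriate witness pair drawn from $\{x_A, y_A, x_B, y_B\}$, one shows that the family of type-(b) dangerous sets is closed under union (after re-selecting witnesses). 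Iterating this produces a single maximal type-(b) dangerous set $X^*$ containing the tail of every positive-capacity ingress edge of $w$, and tight for some pair $x^*, y^* \in V$.

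The contradiction then comes from degree accounting at $w$ and across $X^*$. Since every in-neighbor of $w$ lies in $X^*$ but $w$ itself does not, every ingress edge of $w$ crosses out of $X^*$, giving $d^+(X^*) \geq d^-(w)$; on the other hand, $f = (w, t)$ crosses into $X^*$, so $d^-(X^*)$ receives at least one contribution from $w$. Combining the Eulerian identities $d^+(w) = d^-(w)$ and $d^+(X^*) = d^-(X^*)$ with the tightness $d^+(X^*) = \lambda(x^*, y^*; D)$ and balancing the four edge-groups (inside $X^*$, inside $\bar X^*$, between $X^*$ and $\bar X^* \setminus \{w\}$, and through $w$) forces an arithmetic inequality that is violated by the existence of $f$. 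I expect this final balancing argument, rather than the uncrossing, to be the main obstacle: one must thread carefully between the Eulerian equation at $w$ and at $X^*$ to extract a strict inequality, and this ``degree-count at the maximal tight set'' is precisely the delicate content of the Frank/Jackson proof that I would need to justify most carefully.
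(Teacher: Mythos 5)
The paper itself does \emph{not} prove this theorem: it is quoted verbatim from Bang-Jensen et al.\ (with the original proofs attributed to Frank and to Jackson) and is then invoked as a black box in the derivation of Theorem~\ref{thm:edgesplit}. The line \verb|\thmbangjensenedgesplit*| in the appendix merely restates the theorem; the proof block that follows belongs to Theorem~\ref{thm:edgesplit}. So there is no in-paper proof to compare your reconstruction against, and the honest assessment is that you are re-deriving a nontrivial cited result.

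Your skeleton is essentially the right one (the Frank/Jackson Eulerian splitting argument), and the cut-type classification is correct: splitting decreases $d^+(X)$ by one precisely when $w$ is separated from $\{u,t\}$ by $X$, and Eulerian complementation lets you normalize to the type-(b) configuration $u,t\in X$, $w\notin X$. Two remarks on the rest. First, you have the difficulty inverted. The terminal degree count, which you call the ``main obstacle,'' is actually the clean part and simpler than you describe: once a single maximal tight $X^*$ with $t\in X^*$, $w\notin X^*$, and every positive in-neighbor of $w$ inside $X^*$ is in hand, compare $d^+(X^*+w)$ with $d^+(X^*)$. Passing to $X^*+w$ deletes all $d^-(w)$ in-edges of $w$ from the cut boundary and re-inserts only those out-edges of $w$ that leave $X^*$, strictly fewer than $d^+(w)=d^-(w)$ because $(w,t)$ points back into $X^*$; hence $d^+(X^*+w)<d^+(X^*)=\lambda(x^*,y^*)$, yet $X^*+w$ still separates $x^*$ from $y^*$, a contradiction. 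No four-way edge-group bookkeeping is required. Second, and conversely, the step you wave through with ``after re-selecting witnesses'' is where the real work lies. Tight sets for \emph{different} demand pairs are not naively closed under union: if $y_X\in Y$ and $y_Y\in X$, neither $X\cup Y$ nor $X\cap Y$ separates either witness pair, so submodularity of $d^+$ alone does not close the family. Resolving this requires a case split on whether the witnesses cross, using \emph{both} submodular inequalities that become available under the Eulerian symmetry $d^+(X)=d^-(X)=d^+(\bar X)$, namely $d^+(X)+d^+(Y)\geq d^+(X\cap Y)+d^+(X\cup Y)$ and $d^+(X)+d^+(Y)\geq d^+(X\setminus Y)+d^+(Y\setminus X)$, together with a demand function $R(X)=\max\{\lambda(x,y):x\in X,\ y\in V\setminus X\}$ that is compared against $d^+$ in each case. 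Supplying that case analysis is what would turn your outline into a proof.
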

In our case, we are not concerned with any edge-connectivity other than from $s$. We allow $\lambda(x,y;D^{ef})\neq\lambda(x,y;D)$ as long as $\min_{v\in V_c}F(s,v;\vec{D}^{ef}_{k})=\min_{v\in V_c}\lambda(s,v;\vec{D}^{ef}_{k})\geq|V_c|k$ holds after splitting. Theorem \ref{thm:BJedgesplitting} is used to derive the following theorem:
\begin{restatable}{theorem}{thmedgesplit}\label{thm:edgesplit}
    Given integer-capacity Eulerian digraph $D=(V_s\cup V_c,E)$ and $k\in\N$ with $\min_{v\in V_c}F(s,v;\vec{D}_k)\geq|V_c|k$, for every edge $f=(w,t)$ $(w\in V_s)$ there is an edge $e=(u,w)$ such that $\min_{v\in V_c}F(s,v;\vec{D}^{ef}_{k})\geq|V_c|k$.
\end{restatable}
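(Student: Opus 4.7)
The plan is to reduce Theorem \ref{thm:edgesplit} to the Eulerian splitting result, Theorem \ref{thm:BJedgesplitting}. The obstacle is that $\vec{D}_k$ is not Eulerian: the source $s$ has out-degree $|V_c|k$ and in-degree $0$, and every compute node receives $k$ extra ingress. My fix is to build an Eulerian auxiliary digraph $\tilde D$ from $\vec D_k$ by adding, for each $u\in V_c$, a reverse arc $(u,s)$ of capacity $k$. This makes $s$ balanced with in/out degree $|V_c|k$ each, restores balance at every compute node, and leaves all other nodes untouched, so $\tilde D$ is Eulerian whenever $D$ is.

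Next, I would verify the cut identity $c(A,\bar A;\tilde D) = c(A,\bar A;\vec D_k)$ for every partition with $s\in A$: the added reverse arcs $(u,s)$ all have their head at $s\in A$, so they never cross out of $A$. This gives $\lambda(s,v;\tilde D) = F(s,v;\vec D_k)$ for every $v\in V_c$, and the analogous identity will hold after an edge-split operation at $w\in V_s$, since such an operation does not touch any $s$-incident arc.

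Finally, I would apply Theorem \ref{thm:BJedgesplitting} to $\tilde D$ at the switch node $w$ with the given outgoing edge $f=(w,t)$. Since $s\notin V_s$, the arcs added to form $\tilde D$ avoid $w$, so the edge $e=(u,w)$ returned by the theorem is already an arc of $D$, and the split is identical whether performed in $\tilde D$ or in $D$. Bang-Jensen's theorem then gives $\lambda(s,v;\tilde D^{ef}) = \lambda(s,v;\tilde D)$ for every $v\in V_c$ (the pair $(s,v)$ lies in $V(\tilde D)\setminus\{w\}$). Chaining the two cut identities yields
\[
    F(s,v;\vec D^{ef}_k) \;=\; \lambda(s,v;\tilde D^{ef}) \;=\; \lambda(s,v;\tilde D) \;=\; F(s,v;\vec D_k) \;\geq\; |V_c|k
\]
for every $v\in V_c$, which is the required inequality. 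The only real work is the Eulerianization of $\vec D_k$ and the verification of the cut identity; everything else is a black-box application of the Eulerian splitting theorem. The one subtlety to watch is confirming that the edge $e$ produced by Theorem \ref{thm:BJedgesplitting} cannot involve $s$, but this is immediate from $s\notin V_s$ and the fact that $s$'s neighbors in $\tilde D$ are exactly $V_c$.
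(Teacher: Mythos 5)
Your proposal is correct and follows essentially the same route as the paper: the auxiliary Eulerian graph $\tilde D$ you build is exactly the paper's $\vec{D}_k'$ (add a $k$-capacity arc $(u,s)$ for each $u\in V_c$), followed by the same black-box application of Theorem~\ref{thm:BJedgesplitting} and the same observation that the back-arcs do not alter any $s$-$v$ cut or connectivity. Your extra checks (that the returned edge $e$ cannot be incident to $s$, and that the split is the same operation in $\tilde D$ and $D$) are details the paper leaves implicit but are correctly handled.
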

Note that here, $f$ and $e$ each represent one of the multiedges (or one capacity) between $w,t$ and $u,w$, respectively. Observe that edge splitting does not affect a graph being Eulerian. Thus, in $G(\{Ub_e\})$, we can iteratively replace edges $e=(u,w),f=(w,t)$ by $(u,t)$ for each switch node $w\in V_s$, while maintaining $\min_{v\in V_c}F(s,v;\vec{G}^{ef}_{k}(\{Ub_e\}))\geq|V_c|k$. The resulting graph will have all nodes in $V_s$ isolated. By removing $V_s$, we get a graph $G^*=(V_c,E^*)$ having compute nodes only. Because of theorem \ref{thm:existtrees}, there exists a family of edge-disjoint out-trees in $G^*=(V_c,E^*)$ that achieves the optimal bandwidth performance.\\

While one can split off one capacity of $(u,w),(w,t)$ at a time, this becomes inefficient if the capacities of edges are large. Here, we introduce a way to split off $(u,w),(w,t)$ by maximum capacity at once. Given edges $(u,w),(w,t)\in E$, we construct a flow network $\widehat{D}_{(u,w),v}$ from $\vec{D}_k$ for each $v\in V_c$ that $\widehat{D}_{(u,w),v}$ connects $(u,s),(u,t),(v,w)$ with $\infty$ capacity. Similarly, we construct a flow network $\widehat{D}_{(w,t),v}$ that connects $(w,s),(u,t),(v,t)$ with $\infty$ capacity.
\begin{restatable}{theorem}{thmmulticapasplit}\label{thm:multicapasplit}
    Given integer-capacity Eulerian digraph $D=(V_s\cup V_c,E)$ and $k\in\N$ with $\min_{v\in V_c}F(s,v;\vec{D}_k)\geq|V_c|k$, the maximum capacity that $e=(u,w),f=(w,t)$ can be splitted off with the resulting graph $D^{ef}$ satisfying $\min_{v\in V_c}F(s,v;\vec{D}^{ef}_{k})\geq|V_c|k$ is
    \begin{equation}\label{eq:maxsplitoff}
        M=\min\left\{c(u,w;D)\ ,\ c(w,t;D)\ ,\ \min_{v\in V_c} F(u,w;\widehat{D}_{(u,w),v})-|V_c|k\ ,\ \min_{v\in V_c} F(w,t;\widehat{D}_{(w,t),v})-|V_c|k\right\}.
    \end{equation}
\end{restatable}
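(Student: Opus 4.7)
The plan is to analyze how splitting off $m$ capacity from $e=(u,w), f=(w,t)$ changes the capacity of every $s$-$v$ cut in $\vec{D}_k$, and then to show that the four terms appearing in $M$ correspond exactly to the constraints on $m$ needed to preserve $\min_{v\in V_c} F(s,v;\vec{D}^{ef}_{k}) \geq |V_c|k$.

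First I would set up the cut analysis. Splitting off $m$ capacity modifies the three edges $(u,w)$, $(w,t)$, $(u,t)$ by $-m$, $-m$, $+m$ respectively. For any $s$-$v$ cut $(A,\bar{A})$ with $s\in A$ and $v\in\bar{A}$, the net change in cut capacity depends only on which of $u,w,t$ fall in $A$ versus $\bar{A}$. A direct eight-way case analysis shows that the cut capacity strictly decreases (by exactly $m$) in precisely two configurations: Case (I), where $u,t\in A$ and $w\in\bar{A}$ (only $(u,w)$ crosses, losing $m$); and Case (II), where $w\in A$ and $u,t\in\bar{A}$ (only $(w,t)$ crosses, losing $m$). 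In every other configuration the $\pm m$ contributions of the three modified edges either cancel out or do not cross the cut, so the cut capacity is unchanged and, by the hypothesis, already has capacity $\geq|V_c|k$ in $\vec{D}_k$.

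Next, I relate these two problematic cut shapes to the auxiliary flow networks via the infinite-capacity edges. In $\widehat{D}_{(u,w),v}$, any finite-capacity $u$-$w$ cut $(B,\bar{B})$ must place $s\in B$ (else $u\to s$ crosses), $t\in B$ (else $u\to t$ crosses), and $v\in\bar{B}$ (else $v\to w$ crosses); this is precisely the shape of Case (I) viewed as an $s$-$v$ cut in $\vec{D}_k$, and the capacities of the remaining (non-infinite) edges are unchanged. Symmetrically, finite $w$-$t$ cuts in $\widehat{D}_{(w,t),v}$ correspond bijectively to Case (II) $s$-$v$ cuts. By the min-cut theorem, the minimum capacity of a Case (I) $s$-$v$ cut for a given $v$ equals $F(u,w;\widehat{D}_{(u,w),v})$, and analogously the minimum Case (II) capacity equals $F(w,t;\widehat{D}_{(w,t),v})$.

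Putting everything together, preserving $\min_{v\in V_c}F(s,v;\vec{D}^{ef}_{k})\geq|V_c|k$ is equivalent to requiring, for every $v\in V_c$, that both the minimum Case (I) and minimum Case (II) cuts have capacity $\geq|V_c|k+m$ in $\vec{D}_k$. This translates exactly into $m\leq F(u,w;\widehat{D}_{(u,w),v})-|V_c|k$ and $m\leq F(w,t;\widehat{D}_{(w,t),v})-|V_c|k$ for every $v\in V_c$. Together with the trivial capacity bounds $m\leq c(u,w;D)$ and $m\leq c(w,t;D)$ (capacities cannot drop below zero), this yields the formula for $M$; conversely, each constraint is tight at some witnessing cut, so $M$ is exactly the maximum splittable amount. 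The main obstacle will be keeping the eight-case analysis honest---especially verifying that every remaining case really has zero net change---and confirming that the infinite-edge gadgets are positioned so that min cuts in the auxiliary graphs necessarily land on the Case (I) and Case (II) shapes rather than on some unintended configuration.
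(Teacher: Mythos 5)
Your proposal is correct and takes essentially the same approach as the paper's proof: both identify the two cut configurations whose capacity strictly decreases under the split-off, relate them to the auxiliary networks $\widehat{D}_{(u,w),v}$ and $\widehat{D}_{(w,t),v}$ via the forcing effect of the infinite-capacity gadget edges, and deduce the formula from the min-cut theorem. The paper states the two-configuration fact more tersely (without writing out the eight-case table) but the underlying argument is identical.
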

Based on theorem \ref{thm:multicapasplit}, we are able to develop algorithm \ref{algo:removeswitch}. What is remarkable about algorithm \ref{algo:removeswitch} is that its runtime does not depend on the capacities of the digraph. One should also note that we update a table \emap while splitting. After edge splitting, we are ready to construct spanning trees that only use compute nodes for broadcast. \emap is then used to convert the spanning trees back to paths in $G$ that use switch nodes for send/receive between compute nodes.\\

Figure \ref{fig:states} gives an example of edge splitting. In figure \ref{fig:topo2}, within each cluster $i\in\{1,2\}$, we split off 10 capacity of $(v^c_{i,j},v^s_{i}),(v^s_{i},v^c_{i,(j\bmod 4)+1})$ for $j=1,2,3,4$ to form a ring topology. Across clusters, we split off $1$ capacity of $(v^c_{i,j},v^s_0),(v^s_0,v^c_{(i\bmod 2)+1,j})$ for $j=1,2,3,4$. The resulting topology figure \ref{fig:topo_split} has compute nodes only, and the optimal bandwidth runtime is still $(M/N)(4/4b)$ if bandwidth multiplier $b$ is added. Note that for this example, in the innermost foreach loop of algorithm \ref{algo:removeswitch}, we adjusted the order of iterating through $e$s to prioritize splitting off $(u,v_0^s),(v_0^s,t)$ pairs with $u,t$ in different clusters. The adjustment is not for performance-related reasons, but rather to simplify routing by scheduling all intra-cluster traffic through the in-cluster switch. We successfully met this goal: the capacity of each $f$ reaches 0 before we iterate to an $e$ with $u,t$ in the same cluster.

\subsection{Spanning Tree Construction}\label{sec:constructtree}

\begin{algorithm}[tb]
    \caption{Spanning Tree Construction}
    \label{algo:construct}
    \SetAlgoLined
    \DontPrintSemicolon
    \KwIn{Integer-capacity digraph $D^*=(V_c,E^*)$ and $k\in\N$.}
    \KwOut{Spanning tree $(R_{u,i},\cE(R_{u,i}))$ for each $u\in V_c,i\in[n_u]$. Subgraph $(R_{u,i},\cE(R_{u,i}))$s satisfy $\forall u\in V_c:\sum_{i=1}^{n_u}m(R_{u,i})=k$ and $\forall e\in E^*:\sum\{m(R_{u,i})\ |\ e\in \cE(R_{u,i})\}\leq c(e;D^*)$.}
    \Begin{
        Initialize $R_{u,1}=\{u\},\cE(R_{u,1})=\emptyset,m(R_{u,1})=k,n_u=1$ for all $u\in V_c$.\;
        Initialize $g(e)=c(e;D^*)$ for all $e\in E^*$.\;
        \While{\normalfont there exists $R_{u,i}\neq V_c$}{
            \While{\normalfont $R_{u,i}\neq V_c$}{
                Pick an edge $(x,y)$ in $D^*$ that $x\in R_{u,i},y\notin R_{u,i}$.\;
                Compute $\mu$ as in (\ref{eq:computemu}).\;
                \lIf{\normalfont $\mu=0$}{\textbf{continue}}
                \If{\normalfont $\mu<m(R_{u,i})$}{
                    $n_{u}\leftarrow n_{u}+1$\;
                    Create a new copy $R_{u,n_{u}}=R_{u,i},\cE(R_{u,n_{u}})=\cE(R_{u,i}),m(R_{u,n_{u}})=m(R_{u,i})-\mu$.\;
                    $m(R_{u,i})\leftarrow\mu$\;
                }
                $\cE(R_{u,i})\leftarrow\cE(R_{u,i})+(x,y)$\;
                $R_{u,i}\leftarrow R_{u,i}+y$\;
                $g(x,y)\leftarrow g(x,y)-\mu$. Remove $(x,y)$ if $g(x,y)$ reaches $0$.\;
            }
        }
    }
\end{algorithm}

At this point, we have a digraph $G^*=(V_c,E^*)$ with only compute nodes. In this section, we construct $k$ out-trees from every node that span all nodes $V_c$ in $G^*$. We start by showing the existence of spanning trees with the following theorem in Tarjan~\cite{tarjan}. The theorem was originally proven by Edmonds~\cite{edmonds}.
\begin{restatable}[Tarjan~\cite{tarjan}]{theorem}{thmtarjan}\label{thm:tarjan}
    For any integer-capacity digraph $D=(V,E)$ and any sets $R_i\subseteq V$, $i\in[k]$, there exist $k$ edge-disjoint spanning out-trees $T_i$, $i\in[k]$, rooted respectively at $R_i$, if and only if for every $S\neq V$,   
    \begin{equation}\label{eq:tjcondition}
        c(S,\bar{S};D)\geq|\{i\ |\ R_i\subseteq S\}|.
    \end{equation}
\end{restatable}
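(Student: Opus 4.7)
My plan is to split the argument into necessity and sufficiency. Necessity is a short counting argument: if $T_1, \ldots, T_k$ are edge-disjoint spanning out-trees with root sets $R_i$, then for every $S \neq V$ and every index $i$ with $R_i \subseteq S$, the tree $T_i$ must use at least one edge crossing from $S$ to $\bar S$---otherwise no vertex of $\bar S$ could be reached from $R_i$ within $T_i$, contradicting that $T_i$ is spanning. Summing these edge-disjoint contributions yields $c(S, \bar S; D) \geq |\{i : R_i \subseteq S\}|$.

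For sufficiency I would induct on $k$. The base $k = 0$ is vacuous. For $k \geq 1$, the plan is to extract one tree $T_k$ rooted at $R_k$ such that $D' := D - E(T_k)$ still satisfies the cut inequality for the reduced family, i.e.\ $c(S, \bar S; D') \geq f_{<k}(S) := |\{i < k : R_i \subseteq S\}|$ for every $S \neq V$; applying the induction hypothesis to $D'$ with $R_1, \ldots, R_{k-1}$ then finishes. The tree $T_k$ is built greedily: maintain the set $U$ of vertices already spanned, initialized to $U := R_k$, and at each stage pick an edge $(x, y)$ of $D'$ with $x \in U, y \notin U$ whose removal preserves the invariant, add it to $T_k$, and set $U := U \cup \{y\}$.

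The technical heart, and the main obstacle, is showing that such a safe edge always exists while $U \neq V$. The tool is submodularity. The cut function $S \mapsto c(S, \bar S; D')$ is submodular, while $f_{<k}$ is supermodular (a direct inclusion--exclusion check), so $g(S) := c(S, \bar S; D') - f_{<k}(S)$ is submodular and nonnegative by the invariant. Hence the tight sets $\{S \neq V : g(S) = 0\}$ are closed under intersection and union. A crucial observation is that no tight set can contain $U$: for any $S \supseteq U \supseteq R_k$ with $S \neq V$, every committed $T_k$-edge lies inside $U \subseteq S$, so $c(S, \bar S; D') = c(S, \bar S; D) \geq f(S) = f_{<k}(S) + 1$, forcing $g(S) \geq 1$. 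Combining this fact with the lattice structure of tight sets and the bound $c(U, \bar U; D') \geq f(U) \geq 1$ (so at least one edge leaves $U$), a cross-cut submodular argument produces an edge $(x, y)$ leaving $U$ whose tail $x$ lies outside every tight set; any blocking tight set for $(x, y)$ would have to contain $x$ and exclude $y$, so no tight set blocks this edge, and it is safe to add.
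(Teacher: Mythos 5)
First, note that the paper does not supply a proof of this theorem: it is cited from Tarjan (and ultimately Edmonds), and the proof environment that follows the restatement in the appendix establishes Theorem~\ref{thm:initialexist}, not this one. There is therefore no paper proof to compare against; your attempt must be judged on its own.

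Your necessity argument is correct. For sufficiency, the overall strategy is the standard one: grow a single spanning out-tree $T_k$ rooted at $R_k$ greedily while maintaining $c(S,\bar S;D')\ge f_{<k}(S)$ for all $S\neq V$, then recurse on $D'=D-E(T_k)$ with $R_1,\dots,R_{k-1}$. Identifying $g(S)=c(S,\bar S;D')-f_{<k}(S)$ as submodular and nonnegative, and observing that no tight set can contain $U$, are both right. The gap is in the safe-edge lemma. You assert that a cross-cut submodular argument produces an edge $(x,y)$ leaving $U$ whose tail $x$ lies \emph{outside every tight set}. That is strictly stronger than what is needed, and it is not established; I do not believe it holds in general. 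Tight sets are closed under union only when the union is still a proper subset of $V$, so the union of all tight sets may equal $V$, leaving no vertex of $U$ (indeed no vertex at all) outside every tight set---and even if some $x\in U$ avoided all tight sets, nothing guarantees an edge from that particular $x$ to $\bar U$. What you actually need is weaker: an edge $(x,y)$ with $x\in U$, $y\notin U$ in $D'$ such that no tight set $S$ has $x\in S$ and $y\notin S$. Proving that such an edge exists is the technical heart of Edmonds'/Tarjan's theorem and requires a sharper argument---for instance, fixing an inclusion-maximal tight set $W$, using the directed uncrossing identity together with $c(W,\bar W;D')=f_{<k}(W)$, $c(U,\bar U;D')\ge f_{<k}(U)+1$, and $U\not\subseteq W$ to locate a suitable leaving edge, and showing that any tight set blocking it would uncross with $W$ into a strictly larger tight set, contradicting maximality. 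As written, the key lemma is asserted rather than proven, so the sufficiency direction is incomplete.
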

A spanning out-tree is \textit{rooted at $R_i$} if for every $v\in V-R_i$, there is exactly one directed path from a vertex in $R_i$ to $v$ within the acyclic subgraph of out-tree. To see there exists a family of edge-disjoint spanning out-trees $\{T_{u,i}\}_{u\in V_c,i\in[k]}$ in $G^*$, observe that each $T_{u,i}$ is rooted at $R_{u,i}=\{u\}$, so $|\{(u,i)\ |\ R_{u,i}\subseteq S\}|=|S|k$ for any $S\subset V_c,S\neq V_c$. We show the following theorem:
\begin{restatable}{theorem}{thminitialexist}\label{thm:initialexist}
    Given integer-capacity digraph $D=(V_c,E)$ and $k\in\N$, $c(S,\bar{S};D)\geq|S|k$ for all $S\subset V_c,S\neq V_c$ if and only if $\min_{v\in V_c}F(s,v;\vec{D}_k)\geq|V_c|k$.
\end{restatable}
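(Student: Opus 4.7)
The plan is to apply the max-flow min-cut theorem to $\vec D_k$ and observe that every $s$--$v$ cut is determined by its intersection with $V_c$. Since $\vec D_k$ is obtained from $D$ by adding a source $s$ and capacity-$k$ edges $(s,u)$ for every $u\in V_c$, any cut $(A,\bar A)$ with $s\in A$ can be parameterized by $S:=A\cap V_c$. Partitioning the crossing edges into those emanating from $s$ and those living in $D$, I get the clean decomposition
\[
c(A,\bar A;\vec D_k) \;=\; k\cdot|V_c\setminus S| \;+\; c(S,\bar S;D),
\]
where $\bar S = V_c\setminus S$. By max-flow min-cut applied to each $v\in V_c$,
\[
F(s,v;\vec D_k) \;=\; \min_{S\subseteq V_c\setminus\{v\}}\bigl[k(|V_c|-|S|)+c(S,\bar S;D)\bigr].
\]

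Next I would carry out the equivalence by rearrangement. The condition $\min_{v\in V_c}F(s,v;\vec D_k)\geq |V_c|k$ says that for every $v\in V_c$ and every $S\subseteq V_c\setminus\{v\}$,
\[
k(|V_c|-|S|)+c(S,\bar S;D)\;\geq\;|V_c|k,
\]
which simplifies exactly to $c(S,\bar S;D)\geq k|S|$. Quantifying over pairs $(v,S)$ with $v\notin S$ is the same as quantifying over all $S\subsetneq V_c$, since any proper subset $S$ admits some $v\in\bar S$. The case $S=\emptyset$ is trivially $0\geq 0$. This recovers precisely the cut condition in the theorem, giving both directions simultaneously.

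The proof is essentially one line of max-flow min-cut plus bookkeeping; I do not expect any real obstacle. The only points requiring a little care are the quantifier swap (making sure that "for all $v$, for all $S\subseteq V_c\setminus\{v\}$'' collapses to "for all $S\subsetneq V_c$'') and the observation that the cut-capacity decomposition genuinely covers all $s$--$v$ cuts in $\vec D_k$, which follows because $s$ has no incoming edges and its outgoing edges go only to $V_c$. No edge splitting, flow augmentation, or appeal to the deeper tree-packing theorems is needed here.
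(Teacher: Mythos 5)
Your proof is correct and uses essentially the same argument as the paper: both decompose the capacity of an $s$--$v$ cut in $\vec D_k$ as $c(S,\bar S;D)+k|V_c\setminus S|$ with $S=A\cap V_c$, then invoke max-flow min-cut and rearrange to get the cut condition $c(S,\bar S;D)\geq k|S|$. The only cosmetic difference is that you write both directions at once via the min-cut equality and a quantifier swap, whereas the paper splits into two contrapositive directions with an explicit witness cut; the underlying computation is identical.
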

Since we ensured $\min_{v\in V_c}F(s,v;\vec{G}^*_k)\geq|V_c|k$, condition (\ref{eq:tjcondition}) is satisfied. Spanning tree construction essentially involves iteratively expanding each $R_{u,i}=\cV(T_{u,i})$ from $\{u\}$ to $V_c$ by adding edges to $T_{u,i}$, while maintaining condition (\ref{eq:tjcondition}). Tarjan~\cite{tarjan} has proposed such an algorithm. For each $T_{u,i}$, the algorithm continuously finds an edge $(x,y)$ with $x\in R_{u,i},y\notin R_{u,i}$ that adding this edge to $T_{u,i}$ does not violate (\ref{eq:tjcondition}). It is proven that such an edge is guaranteed to exist. However, the runtime of the algorithm quadratically depends on the total number of spanning trees, i.e. $Nk$ in our case. This becomes problematic when $k$ is large, as $k$ can get up to $\min_{v\in V_c}B^-_G(v)/\gcd(\{b_e\}_{e\in E})$. Fortunately, B{\'e}rczi \& Frank~\cite{berczi2010packing} has proposed a strongly polynomial-time algorithm based on Schrijver~\cite{schrijver}. The runtime of the algorithm does not depend on $k$ at all. The following theorem has been shown:
\begin{restatable}[B{\'e}rczi \& Frank~\cite{berczi2010packing}]{theorem}{thmpolynomialpacking}\label{thm:polynomialpacking}
    Let $D=(V,E)$ be a digraph, $g:E\to\Z_+$ a capacity function, $\mathcal{R}=\{R_1,\dots,R_n\}$ a list of root-sets, $\mathcal{U}=\{U_1,\dots,U_n\}$ a set of convex sets with $R_i\subseteq U_i$, and $m:\mathcal{R}\to\Z_+$ a demand function. There is a strongly polynomial time algorithm that finds (if there exist) $m(\mathcal{R})$ out-trees so that $m(R_i)$ of them are spanning $U_i$ with root-set $R_i$ and each edge $e\in E$ is contained in at most $g(e)$ out-trees.
\end{restatable}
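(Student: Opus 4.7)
The plan is to reduce the statement to a generalized branching packing problem and invoke a submodular-minimization subroutine. The first step is to formulate an Edmonds-type necessary and sufficient condition: the desired packing exists if and only if for every $Z \subsetneq V$ and every index $i$ with $Z \subsetneq U_i$ one has
\[
    c(\bar{Z}, Z; D) \geq \sum\{m(R_j) \mid Z \subseteq U_j,\ R_j \cap Z = \emptyset\}.
\]
The ``only if'' direction is immediate by counting, in any feasible packing, the edges entering $Z$ that must supply the required in-degree for every tree whose root-set lies outside $Z$. The ``if'' direction is the non-trivial content and generalizes both Edmonds' disjoint branching theorem and Theorem~\ref{thm:tarjan}; I would prove it by induction on $\sum_i (|U_i| - |R_i|)$, using supermodularity of the right-hand side to show that at every non-final stage some group of partial trees admits an augmenting edge that preserves the inequality.

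Second, I would describe the algorithm itself. Rather than maintaining the $\sum_i m(R_i)$ individual trees explicitly, the routine keeps, for each root-set $R_i$, a weighted partition of its $m(R_i)$ copies into groups $(V', w')$, where $R_i \subseteq V' \subseteq U_i$ records the current vertex set of the group and $w' \in \Z_+$ its multiplicity. At each iteration, pick any group with $V' \neq U_i$, find an edge $(x,y)$ with $x \in V'$ and $y \in U_i \setminus V'$ whose addition preserves the Edmonds inequality, and augment the group by $\mu$ equal to the minimum of $w'$, the remaining capacity $g(x,y)$, and the slack of the tightest constraining cut. If $\mu < w'$, split the group into two exactly as in Algorithm~\ref{algo:construct}. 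This symbolic treatment decouples the runtime from the magnitudes $m(R_i)$ and $g(e)$, reducing all bookkeeping to a polynomial number of splits.

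The hard part will be realizing each feasibility check in strongly polynomial time. Deciding whether the augmentation of $(x,y)$ by multiplicity $\mu$ keeps the Edmonds inequality intact amounts to minimizing a submodular set function on the sublattice of convex subsets of $U_i$ containing prescribed vertices; this is precisely where Schrijver's strongly polynomial submodular minimization~\cite{schrijver} is invoked, yielding an oracle whose cost depends only on $|V|$, $|E|$, and the convexity description, but not on $g$ or $m$. A separate accounting argument bounds the number of iterations: each augmentation either enlarges some $V'$ (at most $\sum_i |U_i|$ times overall) or strictly refines the weighted partition, and the latter is bounded by a potential function counting distinct active pairs $(V', R_i)$. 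Combining the Edmonds characterization, the symbolic group-splitting scheme, and Schrijver's oracle then produces the strongly polynomial algorithm claimed in the theorem.
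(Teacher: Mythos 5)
The paper does not prove this statement. It is imported verbatim from B\'erczi and Frank's 2010 work on packing arborescences and used as a black box: in the appendix, the restate \verb|\thmpolynomialpacking*| is immediately followed by \verb|\thmcomputemu*| and its proof, with no proof body of its own (the same pattern as the other attributed theorems of Bang--Jensen et al., Tarjan, and Patarasuk--Yuan). So there is no in-paper argument to compare your sketch against; what you have written is a de novo reconstruction of the cited result.

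As a reconstruction, the high-level shape is right --- Edmonds-type feasibility criterion, symbolic group splitting to stay strongly polynomial, Schrijver-style submodular minimization as the oracle --- but the feasibility criterion you state is too weak once the convex sets $U_i$ are genuinely proper subsets of $V$. Your condition $c(\bar Z, Z; D) \ge \sum\{m(R_j) : Z \subseteq U_j,\ R_j \cap Z = \emptyset\}$ only generates a constraint for $Z$ that is contained in some single $U_j$. Take $Z = Z_1 \cup Z_2$ with $Z_1 \subseteq U_1 \setminus U_2$, $Z_2 \subseteq U_2 \setminus U_1$, and $R_1 \cap Z_1 = R_2 \cap Z_2 = \emptyset$: every out-tree spanning $U_1$ must enter $Z_1$ from $U_1\setminus Z_1 \subseteq \bar Z$, every out-tree spanning $U_2$ must enter $Z_2$ from $U_2\setminus Z_2 \subseteq \bar Z$, and these edges land in disjoint parts of $Z$, so $c(\bar Z, Z) \ge m(R_1)+m(R_2)$ is a genuine necessary condition --- but your criterion's right-hand side is $0$ for this $Z$, and the constraints for $Z_1$ and $Z_2$ separately can be satisfied by edges from $Z_2$ into $Z_1$ (and vice versa) that do not cross $(\bar Z, Z)$. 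This is exactly why B\'erczi and Frank phrase their characterization in terms of \emph{bi-sets} (set pairs $B_I \subseteq B_O$), not plain sets; the bi-set in-degree function is what makes the supermodularity and uncrossing arguments you gesture at actually go through. Your plain-set form only collapses to the correct condition in the degenerate case $U_i = V$ for all $i$ (which is the one the paper instantiates, reducing to Theorem~\ref{thm:tarjan}), but the statement you are proving is the general one.
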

In our context, we start with $\mathcal{R}=\{R_u\ |\ u\in V_c\}$ and $R_u=\{u\},U_u=V_c,m(R_u)=k$. We define $\cE(R_i)$ to be the edge set of the $m(R_i)$ out-trees corresponding to $R_i$, so $\cE(R_u)=\emptyset$ is initialized. Given $\mathcal{R}=\{R_1,\dots,R_n\}$, we pick an $R_i\neq V_c$, say $R_1$. Then, we find an edge $(x,y)$ such that $x\in R_1,y\notin R_1$ and $(x,y)$ can be added to $\mu:0<\mu\leq\min\{g(x,y),m(R_1)\}$ copies of the $m(R_1)$ out-trees without violating (\ref{eq:tjcondition}). If $\mu=m(R_1)$, then we directly add $(x,y)$ to $\cE(R_1)$ and $R_1=R_1+y$. If $\mu<m(R_1)$, then we add a copy $R_{n+1}$ of $R_1$ that $\cE(R_{n+1})=\cE(R_1),m(R_{n+1})=m(R_1)-\mu$. We revise $m(R_1)$ to $\mu$, add $(x,y)$ to $\cE(R_1)$, and $R_1=R_1+y$. Finally, we update $g(x,y)=g(x,y)-\mu$. Now, given $\mathcal{R}=\{R_1,\dots,R_{n+1}\}$, we can apply the step recursively until $R_i=V_c$ for all $R_i\in\mathcal{R}$. According to B{\'e}rczi \& Frank~\cite{berczi2010packing}, $\mu$ is defined as followed:
\[
    \textstyle\mu=\min\left\{g(x,y)\ ,\ m(R_1)\ ,\ \min\{c(S,\bar{S};D)-p(S;D):x\in S,y\in\bar{S},R_1\not\subseteq S\}\right\}
\]
where $p(S;D)=\sum\{m(R_i)\ |\ R_i\subseteq S\}$. Neither B{\'e}rczi \& Frank~\cite{berczi2010packing} nor Schrijver~\cite{schrijver} explicitly mentioned how to compute $\mu$ in polynomial time. Therefore, we describe a method for doing so. We construct a flow network $\overline{D}$ such that (a) a node $s_i$ is added for each $R_i$ except $i=1$, (b) connect $x$ to each $s_i$ with capacity $m(R_i)$, and (c) connect each $s_i$ to every vertex in $R_i$ with $\infty$ capacity. We show the following result:
\begin{restatable}{theorem}{thmcomputemu}\label{thm:computemu}
    For any edge $(x,y)$ in $D$ with $x\in R_1,y\notin R_1$,
    \begin{equation}\label{eq:computemu}
        \textstyle\mu=\min\left\{g(x,y)\ ,\ m(R_1)\ ,\ F(x,y;\overline{D})-\sum_{i\neq 1}m(R_i)\right\}.
    \end{equation}
\end{restatable}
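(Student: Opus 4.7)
The plan is to apply max-flow min-cut to $F(x,y;\overline{D})$ and match the resulting minimum-cut expression against the three-term minimum that defines $\mu$ in B{\'e}rczi \& Frank. First, I would observe that any finite $x$-$y$ cut $(A,\bar{A})$ in $\overline{D}$ is effectively parameterized by $S := A \cap V$: because each auxiliary node $s_i$ has $\infty$-capacity edges into every vertex of $R_i$, placing $s_i \in A$ forces $R_i \subseteq S$ (else the cut is infinite), while placing $s_i \in \bar{A}$ cuts the edge $x \to s_i$ of capacity $m(R_i)$. The optimal placement is therefore $s_i \in A$ when $R_i \subseteq S$ (contributing $0$) and $s_i \in \bar{A}$ otherwise (contributing $m(R_i)$), so
\[
c(A,\bar{A};\overline{D}) \;=\; c(S,\bar{S};D) + \sum_{i\neq 1,\, R_i \not\subseteq S} m(R_i).
\]

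Second, I would rewrite this expression using $p(S;D) = \sum_{R_i \subseteq S} m(R_i)$. Splitting off the $i=1$ contribution gives $\sum_{i\neq 1,\, R_i\subseteq S} m(R_i) = p(S;D) - m(R_1)\,\mathbf{1}[R_1\subseteq S]$, so
\[
c(A,\bar{A};\overline{D}) \;=\; c(S,\bar{S};D) - p(S;D) + m(R_1)\,\mathbf{1}[R_1\subseteq S] + \sum_{i\neq 1} m(R_i).
\]
Minimizing over all $S$ with $x \in S, y \in \bar{S}$ and subtracting the constant $\sum_{i\neq 1} m(R_i)$ yields
\[
F(x,y;\overline{D}) - \sum_{i\neq 1} m(R_i) \;=\; \min_{x\in S,\, y\in \bar{S}}\Bigl(c(S,\bar{S};D) - p(S;D) + m(R_1)\,\mathbf{1}[R_1\subseteq S]\Bigr).
\]

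Third, I would partition this minimum by whether $R_1\subseteq S$. On the branch $R_1 \not\subseteq S$ we recover exactly the third term of the original definition of $\mu$. On the branch $R_1 \subseteq S$, the invariant maintained by the algorithm, namely $c(S,\bar{S};D) \geq p(S;D)$ (which is precisely condition~(\ref{eq:tjcondition}) being preserved because at every earlier step $\mu$ was chosen no larger than $c(S,\bar{S};D)-p(S;D)$ for every relevant cut), guarantees $c(S,\bar{S};D) - p(S;D) + m(R_1) \geq m(R_1)$. Since $m(R_1)$ is already one of the three terms in the outer minimum defining $\mu$, this branch is absorbed. Taking the outer $\min$ with $g(x,y)$ and $m(R_1)$ therefore gives $\mu = \min\{g(x,y),\, m(R_1),\, F(x,y;\overline{D}) - \sum_{i\neq 1} m(R_i)\}$, as claimed.

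The main obstacle will be the bookkeeping of the auxiliary nodes $s_i$ in the cut-capacity computation and, more delicately, justifying the domination on the branch $R_1 \subseteq S$. The latter is not a purely combinatorial statement about $\overline{D}$ but relies on the running invariant of the outer packing algorithm; flagging this invariant explicitly (as it propagates from iteration to iteration via the inductive choice of $\mu$) is the key conceptual point, after which the identity follows by routine case analysis.
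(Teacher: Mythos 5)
Your proof is correct and follows essentially the same approach as the paper: parameterize $x$-$y$ cuts in $\overline{D}$ by $S = A\cap V$, observe that $s_i$ is optimally placed in $A$ exactly when $R_i\subseteq S$, reduce to $c(S,\bar{S};D)-p(S;D)$ plus a correction term, and handle the $R_1\subseteq S$ case via the packing invariant $c(S,\bar{S};D)\geq p(S;D)$. The only difference is organizational — you derive a single unified cut formula and split the minimum by whether $R_1\subseteq S$, whereas the paper proves the two inequalities $L\geq F(x,y;\overline{D})-\sum_{i\neq 1}m(R_i)$ and its near-converse separately by examining an arbitrary cut and then the argmin cut — but the key ideas are identical.
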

Thus, $\mu$ can be calculated by computing a single maxflow from $x$ to $y$ in $\overline{D}$. The complete algorithm is described in algorithm \ref{algo:construct}. The resulting $\mathcal{R}$ can be indexed as $\mathcal{R}=\bigcup_{u\in V_c}\{R_{u,1},\dots,R_{u,n_{u}}\}$, where $R_{u,i}$ corresponds to $m(R_{u,i})$ number of identical out-trees rooted at $u$ and specified by edge set $\cE(R_{u,i})$. We have $\sum_{i=1}^{n_u}m(R_{u,i})=k$ for all $u$. Thus, $\mathcal{R}$ can be decomposed into $\{T_{u,i}\}_{u\in V_c,i\in[k]}$. However, since all spanning trees within $R_{u,i}$ are identical, the allgather schedule can simply be specified in terms of $\cE(R_{u,i})$ and $m(R_{u,i})$.\\

After construction, we have edge-disjoint spanning trees $\{T_{u,i}\}_{u\in V_c,i\in [k]}$ in $G^*$. Each of the edge $(u,v)$ in $T_{u,i}$ may correspond to a path $u\to w_1\to\dots\to w_n\to v$ in $G$ with $w_1,\dots,w_n$ being switch nodes. In other words, edges in $T_{u,i}$ only specify the source and destination of send/recv between compute nodes. Thus, one needs to use the \emap in algorithm \ref{algo:removeswitch} to recover the paths in $G$. For any edge $(u,t)$ in $G^*$, $\emap[(u,t)][w]$ denotes the amount of capacity from $u$ to $t$ that is going through $(u,w),(w,t)$. It should be noted that \emap may be recursive, meaning that $(u,w),(w,t)$ may also go through some other switches. Because each capacity of $(u,t)$ corresponds to one capacity of a path from $u$ to $t$ in $G$, the resulting pipeline schedule in $G$ has the same performance in $G^*$, achieving the optimal bandwidth performance (\ref{eq:lowerbound}) in $G$. \\

In figure \ref{fig:states}'s example, we construct a spanning tree like \ref{fig:topo_split_tree} for each of the compute node. By reversing the edge splitting with \emap, the spanning tree becomes the pipeline schedule in \ref{fig:topo_tree2}. Note that the corresponding pipeline schedule of a spanning tree in $G^*$ is not necessarily a tree in $G$. For example, the pipeline schedule in \ref{fig:topo_tree2} visits switches $v_1^s,v_2^s$ multiple times. To obtain a complete allgather pipeline schedule with optimal bandwidth runtime $(M/N)(4/4b)$, one can apply the similar pipeline schedule for each of the compute nodes in \ref{fig:topo_tree2}.\\

One may be tempted to devise a way to construct spanning trees with low heights. This has numerous benefits such as lower latency at small data sizes and better convergence of pipeline schedule towards bandwidth optimality. Although there is indeed potential progress to be made in this direction, constructing edge-disjoint spanning trees of minimum height has been proven to be NP-complete~\cite{npcomplete}.

\subsection[Fixed-k Optimality]{Fixed-$k$ Optimality}\label{sec:fixedk}

A potential problem of our pipeline schedule is that $k$, the number of spanning trees per root, depends linearly on link bandwidths, potentially reaching up to $\min_{v\in V_c}B^-_G(v)/\gcd(\{b_e\}_{e\in E})$. Although the runtime of spanning tree construction does not depend on $k$, in practice, one may want to reduce $k$ to simplify the pipeline schedule. In this section, we offer a way to construct a pipeline schedule with the best possible bandwidth performance for a fixed $k$. We start with the following theorem:

% A potential problem of constructing bandwidth optimal spanning trees is that $k$, the number of spanning trees per root, depends linearly on link bandwidths, potentially reaching up to $\min_{v\in V_c}B^-_G(v)/\gcd(\{b_e\}_{e\in E})$. Although in practice, we have never encountered runtime issues with the batched spanning tree construction, theoretically, $k$ can be intractably large in extreme cases, such as $\{b_e\}_{e\in E}$ are all very large prime numbers. In such cases, one may want to compute the best possible bandwidth performance given a fixed small $k$. There may also be a desire to reduce $k$ to simplify the pipeline schedule. In this section, we offer a way to construct a pipeline schedule with the best possible bandwidth performance for a fixed $k$. We start with the following theorem:

\begin{restatable}{theorem}{thmfixedk}\label{thm:fixedk}
    Given $U\in\R_{+}$ and $k\in\N$, a family of out-trees $\{T_{u,i}\}_{u\in V_c,i\in[k]}$ with $T_{u,i}$ rooted at $u$ and $\cV(T_{u,i})\supseteq V_c$ achieves $\frac{M}{Nk}\cdot U$ bandwidth runtime if and only if it is edge-disjoint in $G(\{\lfloor Ub_e\rfloor\}_{e\in E})$.
\end{restatable}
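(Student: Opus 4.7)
The plan is to reduce the theorem to a direct calculation of the pipeline runtime and then convert a real-valued edge inequality into its integer counterpart via the floor function. Concretely, for each edge $e \in E$ I would define $n_e := \sum_{u \in V_c, i \in [k]} \I[e \in T_{u,i}]$, the number of out-trees in the family that use $e$. Since each root $u \in V_c$ owns a data shard of size $M/N$ and the family assigns $k$ out-trees to $u$, in a fully pipelined broadcast each tree $T_{u,i}$ carries a $1/k$ fraction of $u$'s shard, i.e.\ $M/(Nk)$ bytes along every edge it uses. Summing over the family gives total traffic $m_e = (M/(Nk)) \cdot n_e$ on edge $e$.

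Next I would invoke the pipeline runtime formula stated in \S\ref{sec:pipeline}: the runtime of a fully pipelined schedule equals $\max_{e \in E} m_e / b_e$. Substituting the expression for $m_e$ yields
\[
    T_B \;=\; \frac{M}{Nk} \cdot \max_{e\in E} \frac{n_e}{b_e}.
\]
Hence $T_B \leq (M/(Nk)) \cdot U$ if and only if $n_e / b_e \leq U$ for every edge $e$, i.e.\ $n_e \leq U b_e$ for all $e$.

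The final step is the integrality bridge. Because each $n_e$ is a non-negative integer while $U b_e$ is real, the inequality $n_e \leq U b_e$ is equivalent to $n_e \leq \lfloor U b_e \rfloor$. But this last condition is precisely the definition of the family $\{T_{u,i}\}$ being edge-disjoint in the multigraph $G(\{\lfloor U b_e\rfloor\}_{e\in E})$, where each edge $e$ is duplicated $\lfloor U b_e\rfloor$ times. Combining both equivalences gives the claimed biconditional.

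No step here looks like a serious obstacle; the theorem is essentially a bookkeeping identity tying the pipeline model of \S\ref{sec:pipeline} to the integer edge-disjoint formulation used earlier in the paper. The one place to be careful is matching conventions: verifying that ``achieves $(M/(Nk)) \cdot U$ runtime'' is interpreted as the pipeline runtime being upper-bounded by $(M/(Nk)) \cdot U$, and confirming that the equal split of each root's shard into $k$ pieces is both allowed by and implicit in the pipeline definition. Once those are fixed, the argument is just the substitution and the floor-function observation above.
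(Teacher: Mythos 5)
Your proof is correct and takes essentially the same route as the paper: express the pipeline runtime as $\frac{M}{Nk}\cdot\max_{e}\frac{n_e}{b_e}$, rewrite the bound $T_B\leq\frac{M}{Nk}U$ as $n_e\leq Ub_e$ for all $e$, and use integrality of $n_e$ to pass to $n_e\leq\lfloor Ub_e\rfloor$, which is exactly edge-disjointness in $G(\{\lfloor Ub_e\rfloor\})$. The only cosmetic difference is that you present a single chain of equivalences while the paper splits into $\Rightarrow$ and $\Leftarrow$.
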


To test the existence of edge disjoint $\{T_{u,i}\}_{u\in V_c,i\in[k]}$ in $G(\{\lfloor Ub_e\rfloor\}_{e\in E})$, by theorem \ref{thm:existtrees}, we can simply test whether $\min_{v\in V_c}F(s,v;\vec{G}_k(\{\lfloor Ub_e\rfloor\}))\geq|V_c|k$ holds. The following theorem provides a method for binary search to find the lowest bandwidth runtime for the given $k$.

\begin{restatable}{theorem}{thmfixedkbinarysearch}\label{thm:fixedkbinarysearch}
    Let $\frac{M}{Nk}\cdot U^*$ be the lowest bandwidth runtime that can be achieved with $k$ out-trees per $v\in V_c$. Then, there exists a family of edge-disjoint out-trees $\{T_{u,i}\}_{u\in V_c,i\in[k]}$ in $G(\{\lfloor Ub_e\rfloor\}_{e\in E})$ with $T_{u,i}$ rooted at $u$ and $\cV(T_{u,i})\supseteq V_c$ if and only if $U\geq U^*$.
\end{restatable}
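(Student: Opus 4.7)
The plan is to reduce both directions of the biconditional to Theorem~\ref{thm:fixedk}, which characterizes exactly when a family of out-trees achieves bandwidth runtime $\frac{M}{Nk}\cdot U$. The proof becomes essentially a monotonicity argument coupled with a minimality argument.

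First I would handle the ``only if'' direction. Suppose a family of edge-disjoint out-trees $\{T_{u,i}\}_{u\in V_c,i\in[k]}$ exists in $G(\{\lfloor Ub_e\rfloor\}_{e\in E})$ with the stated root and spanning properties. Theorem~\ref{thm:fixedk} immediately gives that this family achieves bandwidth runtime $\frac{M}{Nk}\cdot U$. Since $\frac{M}{Nk}\cdot U^*$ is, by definition, the lowest achievable runtime with $k$ out-trees per compute node, we must have $\frac{M}{Nk}\cdot U\geq \frac{M}{Nk}\cdot U^*$, and therefore $U\geq U^*$.

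Next I would handle the ``if'' direction, which is where the monotonicity of the floor function does the work. Since $\frac{M}{Nk}\cdot U^*$ is achieved by some family $\{T^*_{u,i}\}_{u\in V_c,i\in[k]}$, applying Theorem~\ref{thm:fixedk} in the reverse direction yields that $\{T^*_{u,i}\}$ is edge-disjoint in $G(\{\lfloor U^* b_e\rfloor\}_{e\in E})$. For any $U\geq U^*$, monotonicity of the floor function gives $\lfloor Ub_e\rfloor\geq\lfloor U^* b_e\rfloor$ for every $e\in E$. Hence the edge-disjointness condition $\sum_{u,i}\I[e\in T^*_{u,i}]\leq\lfloor U^* b_e\rfloor\leq\lfloor Ub_e\rfloor$ continues to hold in $G(\{\lfloor Ub_e\rfloor\}_{e\in E})$, so the same family $\{T^*_{u,i}\}$ witnesses the desired existence.

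The only subtle point is ensuring that $U^*$ is actually attained rather than merely being an infimum, so that the ``if'' direction at $U=U^*$ is legitimate. This follows because the integer vector $(\lfloor Ub_e\rfloor)_{e\in E}$ is a step function of $U$, so the set of feasible $U$ (those for which the required edge-disjoint family exists) is upward closed and its values of $(\lfloor Ub_e\rfloor)$ take only countably many configurations; the infimum $U^*$ is thus the smallest value at which feasibility first occurs and is therefore attained. Aside from this discreteness remark, the theorem is essentially a direct corollary of Theorem~\ref{thm:fixedk}, so I do not anticipate any significant obstacle.
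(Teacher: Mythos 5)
Your proposal is correct and follows essentially the same route as the paper: both directions reduce to Theorem~\ref{thm:fixedk}, with the ``only if'' direction being a minimality/contradiction argument and the ``if'' direction applying the theorem to the optimal family (the paper invokes it directly for all $U\geq U^*$, while you spell out the floor-monotonicity step, which amounts to the same thing). Your added remark that $U^*$ is attained rather than merely an infimum is a detail the paper leaves implicit, and your justification for it is sound.
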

The initial range is:
\[
    \frac{(N-1)k}{\min_{v\in V_c}B^-_G(v)}\leq U^*\leq (N-1)k.
\]
Observe that there must exists $b_e\in E$ such that $U^*b_e\in\Z_+$; otherwise, $U^*$ can be further decreased. Thus, the denominator of $U^*$ must be less than or equal to $\max_{e\in E}b_e$. Similar to optimality binary search, by proposition \ref{lm:fraction}, one can run binary search until the range is smaller than $1/\max_{e\in E}b_e^2$. Then, $U^*$ can be determined exactly by computing the fractional number that is closest to the midpoint, while having a denominator less than or equal to $\max_{e\in E}b_e$. After having $U^*$, one can simply apply edge splitting and spanning tree construction to $G(\{\lfloor U^*b_e\rfloor\}_{e\in E})$ to derive the pipeline schedule. The following theorem gives a bound on how close $\frac{M}{Nk}\cdot U^*$ is to bandwidth optimality (\ref{eq:lowerbound}):
\begin{restatable}{theorem}{thmfixedkapprox}\label{thm:fixedkapprox}
    Let $\frac{M}{Nk}\cdot U^*$ be the lowest bandwidth runtime that can be achieved with $k$ out-trees per $v\in V_c$. Then,
    \[
        \frac{M}{Nk}\cdot U^*\leq\frac{M}{N}\max_{S\subset V,S\not\supseteq V_c}\frac{|S\cap V_c|}{B^+_G(S)}+\frac{M}{Nk}\cdot\frac{1}{\min_{e\in E}b_e}.
    \]
\end{restatable}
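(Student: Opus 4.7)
The plan is to prove the equivalent inequality $U^*\leq k/x^*+1/\min_{e\in E}b_e$, where I abbreviate $1/x^*:=\max_{S\subset V,S\not\supseteq V_c}|S\cap V_c|/B^+_G(S)$; the stated bound then follows by multiplying through by $M/(Nk)$. To do this, I will exhibit the explicit candidate $U':=k/x^*+1/\min_{e\in E}b_e$ and, via Theorem \ref{thm:fixedkbinarysearch}, reduce $U^*\leq U'$ to producing an edge-disjoint family $\{T_{u,i}\}_{u\in V_c,i\in[k]}$ of out-trees in $G(\{\lfloor U'b_e\rfloor\})$ with $T_{u,i}$ rooted at $u$ and $\cV(T_{u,i})\supseteq V_c$.

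The first step is a pointwise capacity inequality: for every $e\in E$, since $b_e\geq\min_{e\in E}b_e$,
\[
    U'b_e\geq(k/x^*)b_e+1,
\]
so $\lfloor U'b_e\rfloor\geq\lfloor(k/x^*)b_e+1\rfloor=\lfloor(k/x^*)b_e\rfloor+1\geq(k/x^*)b_e$. The additive slack $1/\min_{e\in E}b_e$ in $U'$ is chosen precisely to guarantee the $+1$ that dominates the floor loss on every edge simultaneously. Thus the integer capacity $\lfloor U'b_e\rfloor$ dominates the (possibly fractional) capacity $(k/x^*)b_e$ edgewise.

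The second step bounds the maxflow on the smaller of the two networks. The flow network $\vec{G}_k(\{(k/x^*)b_e\})$ coincides with $(k/x^*)\cdot\vec{G}_{x^*}$: the internal edges of $\vec{G}_{x^*}$ carry capacity $b_e$ and its source edges carry capacity $x^*$, so scaling by $k/x^*$ gives capacities $(k/x^*)b_e$ and $k=(k/x^*)\cdot x^*$ respectively, matching $\vec{G}_k(\{(k/x^*)b_e\})$ on both the internal and source edges. Applying Theorem \ref{thm:binarysearch} at $x=x^*$ gives $\min_{v\in V_c}F(s,v;\vec{G}_{x^*})\geq|V_c|x^*$, and scaling by $k/x^*$ yields
\[
    \min_{v\in V_c}F(s,v;\vec{G}_k(\{(k/x^*)b_e\}))\geq|V_c|k.
\]
Monotonicity of maxflow in edge capacities, combined with the pointwise inequality from the previous step, upgrades this to $\min_{v\in V_c}F(s,v;\vec{G}_k(\{\lfloor U'b_e\rfloor\}))\geq|V_c|k$.

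Theorem \ref{thm:existtrees} then produces the desired edge-disjoint family in $G(\{\lfloor U'b_e\rfloor\})$, Theorem \ref{thm:fixedkbinarysearch} concludes $U^*\leq U'$, and the claimed bound follows. I do not anticipate a serious obstacle; the only point needing care is correctly tracking the source-edge capacity $k$ (independent of the $b_e$) when identifying $\vec{G}_k(\{(k/x^*)b_e\})$ as a scaled copy of $\vec{G}_{x^*}$, which is exactly the scaling argument already spelled out in \S\ref{sec:binarysearch}.
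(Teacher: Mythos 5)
Your proposal is correct and follows essentially the same route as the paper: both exhibit an explicit candidate $U$ slightly above $k/x^*$, verify edgewise that the floored capacities dominate $(k/x^*)b_e$, identify $\vec{G}_k(\{(k/x^*)b_e\})$ as a scaled copy of $\vec{G}_{x^*}$ to get the maxflow condition, and conclude via Theorems \ref{thm:existtrees} and \ref{thm:fixedkbinarysearch}. The only difference is cosmetic: the paper takes $U=\max_{e\in E}\lceil kb_e/x^*\rceil/b_e$ and bounds it by $k/x^*+1/\min_{e\in E}b_e$ afterwards, whereas you plug in that bound directly as $U'$.
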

\section{Runtime Analysis}

In this section, we give a runtime analysis of different parts of the algorithm. To summarize, all parts are strongly polynomial-time. Note that the runtime bounds discussed in this section could be loose in many respects. The analysis of this section serves to show that the runtime is polynomial in topology size. We leave tighter runtime bounds for future work.

\paragraph{Optimality Binary Search}The key part of optimality binary search is to compute $\min_{v\in V_c} F(s,v;\vec{G}_x)$, which involves computing maxflow from $s$ to every compute node in $V_c$. Assuming the use of preflow-push algorithm~\cite{preflowpush} to solve network flow, the time complexity to compute $\min_{v\in V_c} F(s,v;\vec{G}_x)$ is $\cO(N|V|^2|E|)$. Note that in practice, one can compute the maxflow from $s$ to each $v\in V_c$ in parallel to significantly speed up the computation. As for how many times $\min_{v\in V_c} F(s,v;\vec{G}_x)$ is computed, observe that the binary search terminates when range is smaller than $1/\min_{v\in V_c}B^-_G(v)^2$. The initial range of binary search is bounded by interval $(0,N)$, so the binary search takes at most $\lceil\log_2(N\min_{v\in V_c}B^-_G(v)^2)\rceil$ iterations. Because $\min_{v\in V_c}B^-_G(v)<|V|\max_{e\in E}b_e$ and $\cO(\log b_e)$ is trivial, the total runtime complexity is $\cO(N|V|^2|E|\log|V|)$.

\paragraph{Edge Splitting}In algorithm \ref{algo:removeswitch}, while we possibly add more edges to the topology, the number of edges is trivially bounded by $\cO(|V|^2)$. Thus, computing $M$ in theorem \ref{thm:multicapasplit} takes $\cO(N|V|^4)$, and $M$ is computed at most $\cO(|V_s||V|^4)$ times in the nested foreach loop. The total runtime complexity is $\cO(N|V_s||V|^8)$.
% In algorithm \ref{algo:removeswitch}, the inner foreach loop adds at most one edge per iteration. Thus, the total number of edges in $G^*$ is loosely bounded by $\cO(|V_s||E|^2)$. The key part again is to compute the $M$ in theorem \ref{thm:multicapasplit}. Because $G^*$ has $N$ vertices and $\cO(|V_s||E|^2)$ number of edges, the time complexity to compute $M$ is $\cO(N^3|V_s||E|^2)$. In algorithm \ref{algo:removeswitch}, $M$ is computed at most $|V_s||E|^2$ times. The total runtime complexity is $\cO(N^3|V_s|^2|E|^4)$.

\paragraph{Spanning Tree Construction}Upon completion of algorithm \ref{algo:removeswitch}, $G^*$ has $N$ vertices and hence $\cO(N^2)$ number of edges. In algorithm \ref{algo:construct}, $\mu$ only needs one maxflow to be computed. The runtime is thus $\cO(N^4)$. B{\'e}rczi \& Frank~\cite{berczi2010packing} proved that $\mu$ only needs to be computed $\cO(mn^2)$ times, where $m$ and $n$ are the number of edges and vertices respectively. Thus, the runtime complexity of algorithm \ref{algo:construct} is $\cO(N^8)$.

% \paragraph{Spanning Tree Construction}The runtime of this part depends on $k$, the number of spanning trees per compute node. For exact bandwidth optimality, $k$ reaches up to $\min_{v\in V_c}B^-_G(v)/\gcd(\{b_e\}_{e\in E})$ in worst-case scenario. One can also simply fix $k$ based on the method described in \S\ref{sec:fixedk}. Given $k$, we borrow the upper bound $\cO(N^2k^2|E|^2)$ from Tarjan~\cite{tarjan}.\todo{After edge splitting, how many edges are there?} However, our batched spanning tree construction is much faster in practice. A tighter upper bound or asymptotic bound is possible but not the focus of this paper.

\paragraph{Fixed-$k$ Optimality}The runtime of this part is similar to optimality binary search, with the exception that the binary search takes at most $\lceil\log_2(Nk\max_{e\in E}b_e^2)\rceil$ iterations instead. Since $\cO(\log b_e)$ and $\cO(\log k)$ are trivial, the total runtime complexity is $\cO(N|V|^2|E|\log N)$.

\bibliographystyle{acm}
\bibliography{ref}

\appendix

\section{Broadcast}\label{sec:broadcast}

Since allgather can be seen as simultaneous broadcast from every compute node, the techniques introduced in this paper can be easily applied to construct bandwidth optimal broadcast. Indeed, it has already been recognized in previous literature~\cite{bwoptBC} that the maximum set of edge-disjoint out-trees constitutes bandwidth optimal pipeline broadcast. In this section, we give a brief discussion on how to construct bandwidth optimal broadcast on switch topology. Similar to allgather lower bound (\ref{eq:lowerbound}), a lower bound for broadcast from a root $r$ is as followed:
\begin{equation}\label{eq:bclowerbound}
    T_B\geq M\brac{\min_{r\in S\not\supseteq V_c}B^+_G(S)}^{-1}=M\brac{\min_{S\cap V_c\notin\{\emptyset,V_c\}}B^+_G(S)}^{-1}.
\end{equation}
The equality is due to $G$ being Eulerian. By (\ref{eq:bclowerbound}), the bandwidth runtime of broadcast is also bounded by the bottleneck cut. The bottleneck cut can be easily found by computing $\min_{v\in V_c}F(r,v;G)$. Like in allgather case, we can iteratively split off incoming and outgoing edges of switch nodes while maintaining $\min_{v\in V_c}F(r,v;G)$ unchanged. After getting the compute-node-only topology $G^*$, an allgather pipeline schedule can be constructed simply by generating the maximum set of edge-disjoint out-trees rooted at $r$. Such a schedule reaches the lower bound (\ref{eq:bclowerbound}) and is thus bandwidth optimal. As for the algorithm to generate edge-disjoint out-trees with a single root, there are plenty of algorithms in existing literature~\cite{tarjan,TONG198373,Gabow}.
\section{Allreduce}\label{sec:allreduce}

The algorithm described in this paper can be used to construct an allreduce pipeline schedule. First of all, one can apply the method in appendix A of Zhao et al.~\cite{zhao2022optimal} to construct a reduce-scatter pipeline schedule. The idea is to construct an allgather pipeline schedule on transpose graph of the topology and then reverse all the communications. Such a reduce-scatter schedule is also bandwidth optimal. One can then construct an allreduce schedule by simply concatenating the reduce-scatter and allgather schedules (RS+AG). In this section, we focus on discussing the bandwidth optimality of allreduce and how close our method can reach. We start with the following theorem:
\begin{restatable}{theorem}{thmARlowerboundone}\label{thm:ARlowerboundone}
    Given Eulerian digraph $G=(V=V_s\cup V_c,E)$, the bandwidth runtime of allreduce satisfies
    \begin{equation}\label{eq:ARlowerboundone}
        T_B\geq M\brac{\min_{S\cap V_c\notin\{\emptyset,V_c\}} B^+_G(S)}^{-1}.
    \end{equation}
\end{restatable}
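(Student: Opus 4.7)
The plan is to adapt the cut-based argument that produces the allgather lower bound (\ref{eq:lowerbound}) to the allreduce setting, exploiting the fact that the partial sum aggregated on one side of a cut is itself a single ``message'' of size $M$ (rather than a multiple of $M/N$ as in allgather). I would fix an arbitrary cut $(S,\bar{S})$ with $S\cap V_c\notin\{\emptyset,V_c\}$, show that any correct allreduce must transmit at least $M$ data across the cut from $S$ into $\bar{S}$, and then divide by $B^+_G(S)$ and take the minimum over $S$ to conclude (\ref{eq:ARlowerboundone}).

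For the key claim, let $m_u$ denote the size-$M$ input vector at compute node $u\in V_c$, and write $\sigma_S=\sum_{u\in S\cap V_c} m_u$ and $\sigma_{\bar{S}}=\sum_{u\in \bar{S}\cap V_c} m_u$. At termination, every $v\in \bar{S}\cap V_c$ must output $\sigma_S+\sigma_{\bar{S}}$. The summand $\sigma_{\bar{S}}$ can be computed entirely inside $\bar{S}$ without any cross-cut communication, so the data transmitted across the cut from $S$ to $\bar{S}$ must collectively determine $\sigma_S$. Specializing to the adversarial instance in which $m_u=0$ for every $u\in\bar{S}\cap V_c$ and letting $m_u$ for $u\in S\cap V_c$ range over all size-$M$ vectors, $\sigma_S$ sweeps out the entire $M$-dimensional data space; by a standard dimension/information-content argument, the total cross-cut traffic from $S$ to $\bar{S}$ must therefore be at least $M$. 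In a pipeline cost model with $B^+_G(S)$ bandwidth crossing the cut, this immediately gives $T_B\ge M/B^+_G(S)$.

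Since $G$ is Eulerian, $B^+_G(S)=B^-_G(S)$, so the same bound also holds with $\bar{S}$ in place of $S$; the minimization in (\ref{eq:ARlowerboundone}) is then well-defined over all $S$ with $S\cap V_c\notin\{\emptyset,V_c\}$ without needing to distinguish which side of the cut plays the role of ``sender''. Taking the minimum of $M/B^+_G(S)$ over all such $S$ yields the stated bound.

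The main obstacle is the information-theoretic step, i.e.\ certifying that no clever encoding can compress $\sigma_S$ below $M$ bytes of cross-cut traffic. In the mathematical cost model used throughout the paper---where data is treated as opaque bytes and reductions are elementwise sums over an abelian group---this reduces to the observation that the map from $S$-side inputs to $\sigma_S$ is surjective onto the size-$M$ vector space, so any channel conveying $\sigma_S$ losslessly must carry at least $M$ bytes of traffic. Once this lemma is in hand, the remaining cut-bandwidth bookkeeping is routine and mirrors the argument leading to (\ref{eq:lowerbound}).
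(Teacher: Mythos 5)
Your proposal is correct and follows the same cut-based skeleton as the paper: fix a cut $(S,\bar{S})$ with $S\cap V_c\notin\{\emptyset,V_c\}$, argue that at least $M$ data must cross it, divide by $B^+_G(S)$, and minimize over $S$. Where you differ is in how the ``at least $M$ must cross'' step is certified. The paper argues per indivisible chunk $C$ of the data, doing a three-way case analysis on where the reduction result of $C$ is computed (in $S$, in $\bar{S}$, or in both), and concludes in every case that $C$ must cross the cut at least once in \emph{each} direction; summing over chunks gives $M$ in each direction. You instead argue information-theoretically: the nodes of $\bar{S}\cap V_c$ must output $\sigma_S+\sigma_{\bar{S}}$, the summand $\sigma_S$ is surjective onto the full $M$-dimensional data space, and hence the $S\to\bar{S}$ traffic cannot be compressed below $M$ in the opaque-data model. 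Both are valid; the paper's version is more elementary in that it never needs to formalize ``information content'' (the step you correctly flag as the main obstacle) --- it only needs the operational fact that a chunk must physically travel to be reduced with another chunk --- and it yields the slightly stronger conclusion that $M$ crosses in both directions, so it does not need the Eulerian property at all for this theorem. Your version, conversely, makes explicit \emph{why} no encoding trick can help in the linear/abelian reduction model, which the paper leaves implicit. One small remark: your appeal to $B^+_G(S)=B^-_G(S)$ is unnecessary, since the minimization already ranges over both $S$ and $\bar{S}$ (each satisfies $S\cap V_c\notin\{\emptyset,V_c\}$), and your one-directional bound $T_B\geq M/B^+_G(S)$ applied to every admissible $S$ already gives the stated inequality.
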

$\min_{S\cap V_c\notin\{\emptyset,V_c\}} B^+_G(S)$ can be easily calculated by computing the minimum maxflow from an arbitrary compute node to every other one in $V_c$, i.e. $\min_{v\in V_c}F(v_0,v;G)$ for arbitrary $v_0\in V_c$. There is another lower bound proven by Patarasuk \& Yuan~\cite{arlowerboundtwo}:
\begin{restatable}[Patarasuk \& Yuan~\cite{arlowerboundtwo}]{theorem}{thmARlowerboundtwo}\label{thm:ARlowerboundtwo}
    To perform allreduce on $N$ processes, there exists a process sending at least $2M(N-1)/N$ amount of data and a process receiving at least $2M(N-1)/N$ amount of data.
\end{restatable}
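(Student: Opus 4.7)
My plan is to reduce the theorem to a single aggregate traffic bound and then pigeonhole. Let $s_p$ and $r_p$ be the total bytes sent and received by process $p$ during the allreduce. Since every sent byte is received by exactly one process, $\sum_p s_p = \sum_p r_p$. Both existential claims of the theorem follow at once if I can show the aggregate bound $\sum_p s_p \geq 2M(N-1)$, for then averaging over the $N$ processes yields $\max_p s_p \geq 2M(N-1)/N$, and the same averaging applied to $\sum_p r_p$ yields $\max_p r_p \geq 2M(N-1)/N$. So the entire burden of the proof is in establishing this one aggregate inequality.

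To prove the aggregate bound I would proceed per output coordinate. Fix any position $j \in \{1, \ldots, M\}$ and consider the induced subproblem: each process $p$ holds a scalar input $x_p[j]$ and must output the common scalar $\sum_p x_p[j]$. Working in an adversarial regime where the per-position inputs are i.i.d.\ uniform over a large finite alphabet, I would argue that the spatio-temporal communication transcript restricted to bytes conveying position-$j$ information must support two complementary data-flow patterns: an aggregation pattern in which all $N$ inputs contribute to some fully reduced intermediate, and a dissemination pattern in which the final sum reaches each of the $N$ processes. The aggregation pattern requires at least $N-1$ communication arcs, since each step can fold at most one additional input into a running partial sum; the dissemination pattern requires at least $N-1$ communication arcs, since each step can create at most one new copy of the final value at a process that does not yet hold it. Summing across the $M$ coordinates then gives $\sum_p s_p \geq 2M(N-1)$.

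The main obstacle is the rigorous argument that the aggregation arcs and dissemination arcs cannot be identified or amortized against one another, since an allreduce schedule may interleave the two phases in time (as in ring allreduce, where no single process ever holds the full sum until the very last step). The cleanest way around this is the information-theoretic formalization that Patarasuk and Yuan employ: replace explicit arc-counting by a cut-based entropy argument on the communication DAG, showing that any cut separating inputs from outputs in the schedule must carry enough entropy both to determine the full sum on the input side of the cut and to reconstruct it on the output side, and that these two requirements together force $2(N-1)$ symbols per position across the transcript. Once this per-position bound is in place, summing over positions and invoking pigeonhole gives the theorem immediately.
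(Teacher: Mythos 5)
Note first that the paper does not prove this theorem: like the Bang-Jensen, Tarjan, and B\'erczi--Frank results it imports, it restates the Patarasuk--Yuan bound with a citation and no proof body, so there is no in-paper argument for you to be compared against. Judged on its own, your pigeonhole reduction is sound --- $\sum_p s_p = \sum_p r_p$, and the aggregate inequality $\sum_p s_p \geq 2M(N-1)$ yields both existential claims by averaging --- and the aggregate inequality itself is true and tight (star reduce-then-broadcast and ring allreduce both meet it with equality).

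The gap is exactly the step you flag and then wave at. You want $2(N-1)$ useful transmissions per coordinate, $N-1$ for aggregation and $N-1$ for dissemination, with no sharing, but your justification (``each step can fold at most one additional input,'' ``each step can create at most one new copy'') does not establish that a single send cannot be charged against both budgets, and in interleaved schedules such as ring allreduce it is genuinely unclear how to separate them: a send carrying a partial sum both advances aggregation at its receiver and forwards material that later lets the final value appear at new processes. Your proposed repair --- a space-time cut whose capacity must cover both the entropy needed to determine the sum on one side and to reconstruct it on the other --- is a gesture rather than an argument: it does not say which cut, does not explain why the two entropy requirements are additive rather than shareable, and a single cut across $N$ process worldlines would most naturally give a bound of order $N$, not $2(N-1)$. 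Indeed, the most direct formalization of your idea (track the subspace $V_p \subseteq \mathbb{F}^N$ of coefficient vectors of affine forms that process $p$ can compute for the fixed coordinate; each send raises $\sum_p \dim V_p$ by at most one) only yields $\sum_p s_p \geq N$ per coordinate, since $\sum_p \dim V_p$ rises from $N$ to at least $2N$. Closing the factor-of-two gap between $N$ and $2(N-1)$ is the entire content of the Patarasuk--Yuan bound, and your proposal does not supply it.
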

\begin{restatable}{corollary}{corARlowerboundtwo}\label{cor:ARlowerboundtwo}
    Given Eulerian digraph $G=(V=V_s\cup V_c,E)$, the bandwidth runtime of allreduce satisfies
    \begin{equation}\label{eq:ARlowerboundtwo}
        T_B\geq\frac{2M(N-1)}{N}\brac{\max_{v\in V_c}\min_{S\cap V_c=\{v\}}B^+_G(S)}^{-1}.
    \end{equation}
\end{restatable}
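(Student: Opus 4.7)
The plan is to combine Theorem~\ref{thm:ARlowerboundtwo} with a cut-bandwidth argument in the spirit of (\ref{eq:lowerbound}) and then to account for the adversarial identity of the ``expensive'' node. First, I would apply Theorem~\ref{thm:ARlowerboundtwo} to an arbitrary allreduce schedule on $G$ to obtain a compute node $v^\star \in V_c$ whose total transmitted volume during the execution is at least $2M(N-1)/N$. Second, I would argue that these $2M(N-1)/N$ bytes must cross every cut $(S,\bar S)$ with $v^\star \in S$ and $S \cap V_c = \{v^\star\}$, which would yield
\[
T_B \cdot B^+_G(S) \;\ge\; \frac{2M(N-1)}{N}
\]
and hence, after minimising over such $S$,
\[
T_B \;\ge\; \frac{2M(N-1)}{N} \Big[\min_{S \cap V_c = \{v^\star\}} B^+_G(S)\Big]^{-1}.
\]
Finally, because we do not control which compute node plays the role of $v^\star$, I would weaken to the maximum over $v \in V_c$, producing exactly (\ref{eq:ARlowerboundtwo}).

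The hard part is the second step: upgrading from ``bytes leaving $v^\star$ on its direct outgoing edges'' (which alone would give only the weaker bound with $B^+_G(\{v^\star\})$ in the denominator) to ``bytes crossing the cut $(S,\bar S)$'' for arbitrary switch-inclusive $S$. To justify this I would invoke a switch-conservation principle: switch nodes in $V_s$ carry no source data and have no demand in the allreduce specification, so in the stateless-switch model used throughout the paper every byte a switch receives must be forwarded back out. Hence any byte that $v^\star$ injects into a switch inside $S$ is eventually re-transmitted out of the switch, and by iterating the argument through the switches in $S$ every such byte must cross some $S \to \bar S$ edge. Combined with the bytes $v^\star$ sends directly to $\bar S$, this accounts for the full $2M(N-1)/N$ of traffic crossing the cut. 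The Eulerian identity $B^+_G(S) = B^-_G(S)$ additionally permits a symmetric derivation from the receive side using the companion half of Theorem~\ref{thm:ARlowerboundtwo}, providing a consistency check.

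The only technicality in the conservation argument is handling bytes that $v^\star$ sends very close to the end of the horizon and that may not finish propagating through the in-$S$ switches by time $T_B$; this is easily absorbed by viewing $T_B$ as the deadline by which all useful transmissions must complete for the allreduce to terminate correctly, and by observing that the $2M(N-1)/N$ figure in Theorem~\ref{thm:ARlowerboundtwo} counts exactly those useful transmissions. The final $\max_{v \in V_c}$ step is then a routine adversarial weakening, and the resulting inequality holds uniformly over every allreduce schedule on $G$.
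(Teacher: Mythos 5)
The paper gives no proof of Corollary~\ref{cor:ARlowerboundtwo} at all --- it is presented as an immediate consequence of Theorem~\ref{thm:ARlowerboundtwo} --- and your derivation is exactly the intended one: locate the heavy sender $v^\star$, lower-bound the traffic across every cut $(S,\bar S)$ with $S\cap V_c=\{v^\star\}$, minimize over such $S$, and then weaken to $\max_{v\in V_c}$ because the theorem only asserts the \emph{existence} of $v^\star$. The one soft spot is the switch-conservation step. As literally stated it does not force a byte that $v^\star$ injects into a switch inside $S$ to cross into $\bar S$: conservation at switches only guarantees the byte leaves each switch, and since $v^\star$ itself lies in $S$, the byte could be routed back to $v^\star$ (possibly after cycling among switches of $S$) without ever crossing the cut, so the iteration "through the switches in $S$" does not close. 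The correct patch is the one you already gesture at in your final paragraph: the $2M(N-1)/N$ of Theorem~\ref{thm:ARlowerboundtwo} counts data that, for correctness, must be communicated from $v^\star$ to \emph{other} compute processes; since every other compute node lies in $\bar S$, each such end-to-end transmission must traverse the cut at least once no matter how it is routed through the switches of $S$. Phrased at that logical, process-to-process level, the conservation detour is unnecessary, and the remaining steps (the $\min_S$, the adversarial $\max_v$, and the symmetric receive-side check via $B^+_G(S)=B^-_G(S)$) are all correct.
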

$\min_{S\cap V_c=\{v\}}B^+_G(S)$ can also be computed by a maxflow with $v$ as source and all other compute nodes connected to a sink with $\infty$ capacity. In practice, we found (\ref{eq:ARlowerboundone}) and (\ref{eq:ARlowerboundtwo}) alternately serve as the greater lower bound, depending on the network topology. Based on these lower bounds, we have the following theorem:
\begin{restatable}{theorem}{thmARoptimal}\label{thm:ARoptimal}
    Given Eulerian digraph $G=(V=V_s\cup V_c,E)$, let $S^*=\argmax_{S\subset V,S\not\supseteq V_c}|S\cap V_c|/B^+_G(S)$. Then, concatenating bandwidth optimal reduce-scatter and allgather gives bandwidth optimal allreduce if one of the following is true:
    \begin{enumerate}[label=\normalfont(\alph*)]
        \item\label{ARoptimala} $|S^*\cap V_c|=N/2$;
        \item\label{ARoptimalb} $S^*\cap V_c=\{v'\}$ and $\displaystyle\min_{S\cap V_c=\{v'\}}B^+_G(S)=\max_{v\in V_c}\min_{S\cap V_c=\{v\}}B^+_G(S)$ for some $v'\in V_c$.
    \end{enumerate}
\end{restatable}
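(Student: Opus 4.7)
The plan is to match the RS+AG runtime against the two allreduce lower bounds (\ref{eq:ARlowerboundone}) and (\ref{eq:ARlowerboundtwo}) case by case. Since our main construction yields bandwidth-optimal allgather with runtime $\frac{M}{N}\cdot\frac{|S^*\cap V_c|}{B_G^+(S^*)}$, and (as noted at the start of this appendix) the transpose-graph trick gives a bandwidth-optimal reduce-scatter of the same runtime, the concatenated schedule runs in time $T_{\mathrm{RS+AG}}=\frac{2M|S^*\cap V_c|}{N\,B_G^+(S^*)}$. Establishing bandwidth optimality then reduces to exhibiting, in each case, an allreduce lower bound of exactly this value.

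In case (a), where $|S^*\cap V_c|=N/2$, the RS+AG runtime collapses to $M/B_G^+(S^*)$. The plan is to match it with lower bound (\ref{eq:ARlowerboundone}), for which I must prove $\min_{S\cap V_c\notin\{\emptyset,V_c\}} B_G^+(S)=B_G^+(S^*)$. The $\leq$ direction is immediate because $S^*$ is a valid cut for this minimum. For the $\geq$ direction, I take an arbitrary valid cut $S$ and use the Eulerian identity $B_G^+(S)=B_G^+(\bar S)$ to assume WLOG $|S\cap V_c|\geq N/2$ (since replacing $S$ by its complement leaves $B_G^+(S)$ unchanged while flipping which side contains more compute nodes). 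The argmax property of $S^*$ then gives $\frac{|S\cap V_c|}{B_G^+(S)}\leq\frac{N/2}{B_G^+(S^*)}$, which rearranges into $B_G^+(S)\geq\frac{2|S\cap V_c|}{N}B_G^+(S^*)\geq B_G^+(S^*)$, as desired.

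In case (b), where $S^*\cap V_c=\{v'\}$ along with the stated egress identity, the RS+AG runtime is $\frac{2M}{N\,B_G^+(S^*)}$. The plan is to match it against lower bound (\ref{eq:ARlowerboundtwo}), which requires showing $\max_{v\in V_c}\min_{S\cap V_c=\{v\}}B_G^+(S)=(N-1)\,B_G^+(S^*)$. By the stated egress identity this max equals $\min_{S\cap V_c=\{v'\}}B_G^+(S)$, so it suffices to analyze cuts with compute part $\{v'\}$. For any such $S$, the complement $\bar S$ satisfies $\bar S\cap V_c=V_c\setminus\{v'\}$ of size $N-1$; applying the argmax property of $S^*$ to $\bar S$ gives $\frac{N-1}{B_G^+(\bar S)}\leq\frac{1}{B_G^+(S^*)}$, and combining with $B_G^+(S)=B_G^+(\bar S)$ (Eulerian) yields $B_G^+(S)\geq (N-1)\,B_G^+(S^*)$. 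A matching upper bound comes from taking $S=S^*$ in conjunction with the same argmax/Eulerian pairing (so that the inequalities collapse). Plugging into (\ref{eq:ARlowerboundtwo}) then gives $\frac{2M(N-1)}{N\,(N-1)\,B_G^+(S^*)}=\frac{2M}{N\,B_G^+(S^*)}$, which coincides with $T_{\mathrm{RS+AG}}$.

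The main obstacle is the algebra in case (b): the $(N-1)$ factor produced by complementing a single-compute-node cut must cancel exactly against the $(N-1)$ factor in the Patarasuk--Yuan bound, and this cancellation relies delicately on the Eulerian identity transferring information between a cut and its complement, combined with both directions of the argmax inequality. Care is needed in verifying that every minimum and maximum in the argument is actually attained by an admissible cut (in particular, that $S^*$ contributes to $\min_{S\cap V_c=\{v'\}}B_G^+(S)$ and that $\bar{S^*}$ is valid in the argmax). A secondary check, easy but necessary, is that in each case the other lower bound (the one not used to match) does not strictly exceed $T_{\mathrm{RS+AG}}$ — otherwise RS+AG would fail to be optimal for reasons orthogonal to the matching bound.
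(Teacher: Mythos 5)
Your overall strategy --- compute $T_{\mathrm{RS+AG}}$ as twice the allgather optimum and show it does not exceed one of the two allreduce lower bounds, which then forces equality --- is exactly the paper's. Case (a) is correct and matches the paper, except that you prove more than necessary: optimality only requires $\min_{S\cap V_c\notin\{\emptyset,V_c\}}B^+_G(S)\leq B^+_G(S^*)$, which is immediate because $S^*$ is admissible in that minimum; your reverse inequality (via the Eulerian identity and the argmax property) is correct but not needed, since a valid lower bound automatically sits below the runtime of any correct schedule. For the same reason, your ``secondary check'' that the unused lower bound does not exceed $T_{\mathrm{RS+AG}}$ is vacuous.

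Case (b) is where there is a real problem. Reading the hypothesis literally as $S^*\cap V_c=\{v'\}$, you get $T_{\mathrm{RS+AG}}=\frac{2M}{NB^+_G(S^*)}$ and then claim $\min_{S\cap V_c=\{v'\}}B^+_G(S)=(N-1)B^+_G(S^*)$ by combining a lower bound of $(N-1)B^+_G(S^*)$ (from complementation plus the argmax property) with an upper bound of $B^+_G(S^*)$ (from $S^*$ itself). These two bounds are incompatible unless $N\leq 2$ --- this is exactly the paper's own remark that $|S^*\cap V_c|$ cannot be below $N/2$, specialized to $|S^*\cap V_c|=1$. So your argument goes through only because the literal hypothesis is degenerate for $N>2$; the inequalities do not ``collapse'' by any substantive mechanism, and you never flag this. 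The paper's proof instead takes the RS+AG runtime to be $2\cdot\frac{M}{N}\cdot\frac{N-1}{B^+_G(S^*)}$, i.e., it treats the bottleneck cut as having $|S^*\cap V_c|=N-1$ with $v'$ the lone compute node in $\overline{S^*}$ (the statement's $S^*\cap V_c=\{v'\}$ is evidently a slip for $\overline{S^*}\cap V_c=\{v'\}$). Under that reading the case is one line and needs no exact evaluation of the minimum: $\min_{S\cap V_c=\{v'\}}B^+_G(S)\leq B^+_G(\overline{S^*})=B^+_G(S^*)$ by the Eulerian identity, so the right-hand side of (\ref{eq:ARlowerboundtwo}) is at least $\frac{2M(N-1)}{NB^+_G(S^*)}=T_{\mathrm{RS+AG}}$. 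You should have noticed the $N\leq 2$ red flag and resolved the discrepancy rather than papering over it.
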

Note that $|S^*\cap V_c|$ cannot be less than $N/2$; otherwise, $|S\cap V_c|/B^+_G(S)<|\bar{S}\cap V_c|/B^+_G(\bar{S})$. Condition (\ref{ARoptimalb}) is trivial if the topology is symmetric that $\min_{S\cap V_c=\{v\}}B^+_G(S)$ is the same for all $v\in V_c$. The switch topology in figure \ref{fig:topo} satisfies condition \ref{ARoptimala}. Thus, RS+AG gives bandwidth optimal allreduce in this topology.\\

Another way of constructing allreduce is concatenating reduce and broadcast (RE+BC) as in Blink~\cite{blink}. Here, we show that optimal RS+AG is always strictly better than optimal RE+BC. From \S\ref{sec:broadcast}, we know the optimal bandwidth runtime of broadcast is (\ref{eq:bclowerbound}). Observe that
\[
    M\brac{\min_{S\cap V_c\notin\{\emptyset,V_c\}}B^+_G(S)}^{-1}=\frac{M}{N}\max_{S\cap V_c\notin\{\emptyset,V_c\}}\frac{N}{B^+_G(S)}>\frac{M}{N}\max_{S\subset V,S\not\supseteq V_c}\frac{|S\cap V_c|}{B^+_G(S)}.
\]
Thus, (\ref{eq:bclowerbound}) is strictly greater than (\ref{eq:lowerbound}). Take the switch topology in figure \ref{fig:topo} as example. The optimal bandwidth runtime of broadcast is $M/4b$ by (\ref{eq:bclowerbound}), which is 2x than the allgather optimality $(M/N)(4/4b)$ by (\ref{eq:lowerbound}). \\

Allreduce optimality remains largely unsolved. We leave this topic for future work. In summary, we ask the following open questions:
\begin{enumerate}
    \item Does bandwidth optimal RS+AG equal bandwidth optimal allreduce?
    \item What is the necessary and sufficient condition for bandwidth optimal RS+AG to be a bandwidth optimal allreduce?
    \item What is the bandwidth optimality of allreduce in general?
    \item How to construct bandwidth optimal allreduce in general?
\end{enumerate}
\section{Proofs}

\thmbinarysearch*
\begin{proof}
    $\Rightarrow$: Suppose $1/x<\max_{S\subset V,S\not\supseteq V_c}|S\cap V_c|/B^+_G(S)$. Let $S'\subset V,S'\not\supseteq V_c$ be the set that $1/x<|S'\cap V_c|/B^+_G(S')$. Pick arbitrary $v'\in V_c-S'$. Consider the maxflow $F(s,v';\vec{G}_x)$ and $s$-$v'$ cut $(A,\bar{A})$ in $G$ that $A=S'+s$. We have
	\begin{equation}\label{eq:swppagarbitrarycut}
		c(A,\bar{A};\vec{G}_x)=c(S',\bar{A};\vec{G}_x)+\sum_{u\in\bar{A}\cap V_c}c(s,u;\vec{G}_x)=B^+_{G}(S')+|V_c-S'|x<|S'\cap V_c|x+|V_c-S'|x=|V_c|x.
	\end{equation}
	By min-cut theorem, $\min_{v\in V_c}F(s,v;\vec{G}_x)\leq F(s,v';\vec{G}_x)\leq c(A,\bar{A};\vec{G}_x)<|V_c|x$.\\

    $\Leftarrow$: Suppose $1/x\geq\max_{S\subset V,S\not\supseteq V_c}|S\cap V_c|/B^+_G(S)$. Pick arbitrary $v'\in V_c$. Let $(A,\bar{A})$ be arbitrary $s$-$v'$ cut and $S'=V\cap A=A-s$. It follows that $1/x\geq|S'\cap V_c|/B^+_G(S')$. Thus, following (\ref{eq:swppagarbitrarycut}),
	\[
		c(A,\bar{A};\vec{G}_x)=B^+_G(S')+|V_c-S'|x\geq |S'\cap V_c|x+|V_c-S'|x=|V_c|x.
	\]
	Because cut $(A,\bar{A})$ is arbitrary, we have $F(s,v';\vec{G}_x)\geq|V_c|x$. Because $v'$ is also arbitrary, we have $\min_{v\in V_c}F(s,v;\vec{G}_x)\geq|V_c|x$.
\end{proof}

\lmfraction*
\begin{proof}
    Because $a/b\neq c/d$, we have $ad-bc\neq 0$. Thus,
    \[
        \abs{\frac{a}{b}-\frac{c}{d}}=\abs{\frac{ad-bc}{bd}}\geq\frac{1}{bd}\geq\frac{1}{X^2}.
    \]
\end{proof}

\lmsmallestUk*
\begin{proof}
    Since $U/k=1/x^*$, we have $k=Ux^*$, so finding the smallest $k$ is to find the smallest $U$ such that (a) $Ux^*=Uq/p\in\N$ and (b) $Ub_e\in\N$ for all $e\in E$. Suppose $U=\alpha/\beta$ and $\gcd(\alpha,\beta)=1$. Because $\alpha,\beta$ are coprime, $Ub_e\in\N$ implies $\beta|b_e$ for all $e\in E$. Again, because $p,q$ are coprime, $Uq/p\in\N$ implies $p|\alpha$ and $\beta|q$. Thus, the smallest such $\alpha$ is $p$, and the largest such $\beta$ is $\gcd(q, \{b_e\}_{e\in E})$. The proposition follows.
\end{proof}

\thmbangjensentrees*
\thmexisttrees*
\begin{proof}
    $\Rightarrow$: Pick arbitrary $v\in V_c$. Given the family of edge-disjoint out-trees $\{T_{u,i}\}_{u\in V_c,i\in[k]}$, we push one unit of flow from $s$ to $v$ along the path from $u$ to $v$ within tree $T_{u,i}$ for each $u\in V_c,i\in[k]$. Thus, we have constructed a flow assignment with $|V_c|k$ amount of flow. Since $v\in V_c$ is arbitrary, we have $\min_{v\in V_c}F(s,v;\vec{D}_k)\geq|V_c|k$.\\

    $\Leftarrow$: Suppose $\min_{v\in V_c}F(s,v;\vec{D}_k)\geq|V_c|k$. It immediately implies that $\lambda(s,v;\vec{D}_k)\geq|V_c|k$ for all $v\in V_c$. Note that $T'=V_c$, so by theorem \ref{thm:bang-jensen}, a family $\cF$ of edge-disjoint out-trees rooted at $s$ exists that each $v\in V_c$ belongs to at least $|V_c|k$ of them. Since $d^+(s)=|V_c|k$ in $\vec{D}_k$, $\cF$ has exactly $|V_c|k$ edge-disjoint out-trees rooted at $s$ and each out-tree spans $V_c$. In addition, for each $v\in V_c$, since $c(s,v;\vec{D}_k)=k$, there are exactly $k$ out-trees in $\cF$ in which $v$ is the only child of root $s$. By removing the root $s$ from every out-tree in $\cF$, we have the family of edge-disjoint out-trees $\{T_{u,i}\}_{u\in V_c,i\in[k]}$ in $D$ as desired.
\end{proof}

\thmbangjensenedgesplit*
\thmedgesplit*
\begin{proof}
    Consider the flow network $\vec{D}_k$. We construct $\vec{D}_k'$ by adding a $k$-capacity edge from each $v\in V_c$ back to $s$. It is trivial to see that $\vec{D}_k'$ is Eulerian. By theorem \ref{thm:BJedgesplitting}, given $f=(w,t)$, there exists an edge $e=(u,w)$ such that $\lambda(s,v;\vec{D}'^{ef}_k)=\lambda(s,v;\vec{D}_k')$ for all $v\in V_c$. Observe that adding edges from $V_c$ to $s$ does not affect the edge-connectivity from $s$ to any $v\in V_c$, so for all $v\in V_c$,
    \[
        F(s,v;\vec{D}^{ef}_{k})=\lambda(s,v;\vec{D}^{ef}_k)=\lambda(s,v;\vec{D}'^{ef}_k)=\lambda(s,v;\vec{D}_k')=\lambda(s,v;\vec{D}_k)=F(s,v;\vec{D}_k).
    \]
    The theorem trivially follows.
\end{proof}

\thmmulticapasplit*
\begin{proof}
    First of all, one should note that for any $s$-$v$ cut $(A,\bar{A})$ with $v\in V_c$ and $A\subset V_s\cup V_c+s$, if $s,u,t\in A\land v,w\in\bar{A}$, then $(A,\bar{A})$ has the same capacity in $\vec{D}_k$ and $\widehat{D}_{(u,w),v}$ i.e. $c(A,\bar{A};\vec{D}_k)=c(A,\bar{A};\widehat{D}_{(u,w),v})$. Similarly, if $s,w\in A\land v,u,t\in\bar{A}$, then $c(A,\bar{A};\vec{D}_k)=c(A,\bar{A};\widehat{D}_{(w,t),v})$. \\
    
    $\geq$: Suppose we split off $(u,w),(w,t)$ by $M$ times and then $F(s,v';\vec{D}^{ef}_k)<|V_c|k$ for some $v'\in V_c$. Let $(A,\bar{A})$ be the min $s$-$v'$ cut in $\vec{D}^{ef}_k$ that $c(A,\bar{A};\vec{D}^{ef}_k)=F(s,v';\vec{D}^{ef}_k)<|V_c|k$. We assert that $(A,\bar{A})$ must cut through $(u,w)$ and $(w,t)$ such that either $s,u,t\in A\land v',w\in\bar{A}$ or $s,w\in A\land v',u,t\in\bar{A}$; otherwise, we have $F(s,v';\vec{D}_k)\leq c(A,\bar{A};\vec{D}_k)=c(A,\bar{A};\vec{D}^{ef}_k)<|V_c|k$ (note that splitting off $(u,w),(w,t)$ adds edge $(u,t)$). Suppose $s,u,t\in A\land v',w\in\bar{A}$, then $c(A,\bar{A};\vec{D}_k)=c(A,\bar{A};\widehat{D}_{(u,w),v'})$. It is trivial to see that $c(A,\bar{A};\vec{D}_k)=c(A,\bar{A};\vec{D}^{ef}_k)+M$. Thus, we have
    \[
        F(u,w;\widehat{D}_{(u,w),v'})\leq c(A,\bar{A};\widehat{D}_{(u,w),v'})=c(A,\bar{A};\vec{D}_k)=c(A,\bar{A};\vec{D}^{ef}_k)+M<|V_c|k+M,
    \]
    contradicting $M\leq\min_{v\in V_c} F(u,w;\widehat{D}_{(u,w),v})-|V_c|k$. For $s,w\in A\land v',u,t\in\bar{A}$, one can similarly show a contradiction by looking at $F(w,t;\widehat{D}_{(w,t),v'})$.\\
    
    $\leq$: Suppose we split off $(u,w),(w,t)$ by $M'>M$ times and the resulting graph is $D^{ef}$. It is trivial to see that $M'$ cannot be greater than $c(u,w;D)$ or $c(w,t;D)$. Suppose $M'>F(u,w;\widehat{D}_{(u,w),v'})-|V_c|k$ for some $v'\in V_c$. Consider the min $u$-$w$ cut $(A,\bar{A})$ with $c(A,\bar{A};\widehat{D}_{(u,w),v'})=F(u,w;\widehat{D}_{(u,w),v'})$. Because $(u,s),(u,t),(v',w)$ have $\infty$ capacity, we have $s,u,t\in A\land v',w\in\bar{A}$ and hence $c(A,\bar{A};\vec{D}_k)=c(A,\bar{A};\widehat{D}_{(u,w),v'})$. It is again trivial to see that $c(A,\bar{A};\vec{D}^{ef}_k)=c(A,\bar{A};\vec{D}_k)-M'$ and $(A,\bar{A})$ being an $s$-$v'$ cut in $\vec{D}^{ef}_k$. Hence,
    \[
        F(s,v';\vec{D}^{ef}_k)\leq c(A,\bar{A};\vec{D}^{ef}_k)=c(A,\bar{A};\vec{D}_k)-M'=c(A,\bar{A};\widehat{D}_{(u,w),v'})-M'<|V_c|k.
    \]
    One can show similar result for $M'>F(w,t;\widehat{D}_{(w,t),v'})-|V_c|k$.
\end{proof}

\thmtarjan*
\thminitialexist*
\begin{proof}
    $\Rightarrow$: Suppose $\min_{v\in V_c}F(s,v;\vec{D}_k)<|V_c|k$. Let $v'$ be the vertex that $F(s,v';\vec{D}_k)<|V_c|k$. By min-cut theorem, there exists an $s$-$v'$ cut $(A,\bar{A})$ in $\vec{D}_k$ such that $c(A,\bar{A};\vec{D}_k)=F(s,v';\vec{D}_k)<|V_c|k$. Let $S=V_c\cap A$, then $A=S+s$, $\bar{S}=V_c-S=V_c+s-A=\bar{A}$, and hence
    \[
        c(S,\bar{S};D)=c(A,\bar{A};\vec{D}_k)-\sum_{u\in\bar{A}}c(s,u;\vec{D}_k)<|V_c|k-|V_c-S|k=|S|k.
    \]
    $\Leftarrow$: Suppose there exists $S\subset V_c,S\neq V_c$ such that $c(S,\bar{S};D)<|S|k$. Pick arbitrary $v'\in V_c-S$. Consider $s$-$v'$ cut $(A,\bar{A})$ such that $A=S+s$. By min-cut theorem, we have
    \[
        F(s,v';\vec{D}_k)\leq c(A,\bar{A};\vec{D}_k)=c(S,\bar{S};D)+\sum_{u\in\bar{A}}c(s,u;\vec{D}_k)<|S|k+|V_c-S|k=|V_c|k.
    \]
\end{proof}

\thmpolynomialpacking*
\thmcomputemu*
\begin{proof}
    For simplicity of notation, let $L=\min\{c(S,\bar{S};D)-p(S;D):x\in S,y\in\bar{S},R_1\not\subseteq S\}$. We will prove (\ref{eq:computemu}) by showing that either $L=F(x,y;\overline{D})-\sum_{i\neq 1}m(R_i)$ or $L\geq F(x,y;\overline{D})-\sum_{i\neq 1}m(R_i)\geq m(R_1)$. Let $S\subset V_c$ be arbitrary that $x\in S,y\in\bar{S},R_1\not\subseteq S$, and Let $A=S\cup\{s_i\ |\ R_i\subseteq S\}$. It follows that $(A,\bar{A})$ is an $x$-$y$ cut in $\overline{D}$ and hence
    \begin{equation*}
    \begin{aligned}
        c(S,\bar{S};D)-p(S;D)
        &\textstyle=c(S,\bar{S};D)-\sum\{m(R_i)\ |\ R_i\subseteq S\}\\
        &\textstyle=c(S,\bar{S};D)+\sum\{m(R_i)\ |\ i\neq 1,R_i\not\subseteq S\}-\sum_{i\neq 1}m(R_i)\\
        &\textstyle=c(A,\bar{A};\overline{D})-\sum_{i\neq 1}m(R_i)\\
        &\textstyle\geq F(x,y;\overline{D})-\sum_{i\neq 1}m(R_i).
    \end{aligned}
    \end{equation*}
    The second equality is due to $R_1\not\subseteq S$, so $\sum\{m(R_i)\ |\ R_i\subseteq S\}=\sum\{m(R_i)\ |\ i\neq 1,R_i\subseteq S\}$. Since $S$ is arbitrary, we have $L\geq F(x,y;\overline{D})-\sum_{i\neq 1}m(R_i)$.\\
    
    Let $(A',\overline{A'})$ be the min $x$-$y$ cut in $\overline{D}$ and $S'=A'\cap V_c$. We assert that for any $i\neq 1,R_i\subseteq S'$, we have $s_i\in A'$; otherwise, by moving $s_i$ from $\overline{A'}$ to $A'$, we create a cut with lower capacity, contradicting $(A',\overline{A'})$ being min-cut. We also assert that for any $i\neq 1,R_i\not\subseteq S'$, we have $s_i\in\overline{A'}$; otherwise, there exists $v\in R_i-S'$ that $\infty$ edge $(s_i,v)$ crosses $(A',\overline{A'})$. Thus, we have
    \begin{equation}\label{eq:computemuproof2}
    \begin{aligned}
        \textstyle F(x,y;\overline{D})-\sum_{i\neq 1}m(R_i)
        &\textstyle=c(A',\overline{A'};\overline{D})-\sum_{i\neq 1}m(R_i)\\
        &\textstyle=c(S',\overline{S'};D)+\sum\{m(R_i)\ |\ i\neq 1,R_i\not\subseteq S'\}-\sum_{i\neq 1}m(R_i)\\
        &\textstyle=c(S',\overline{S'};D)-\sum\{m(R_i)\ |\ i\neq 1,R_i\subseteq S'\}.
    \end{aligned}
    \end{equation}
    Now, we consider two cases:
    \begin{enumerate}[label=(\alph*)]
        \item Suppose $R_1\not\subseteq S'$. Then, $c(S',\overline{S'};D)-p(S';D)\geq L$. By (\ref{eq:computemuproof2}), we have
        \[
            \textstyle L\geq F(x,y;\overline{D})-\sum_{i\neq 1}m(R_i)=c(S',\overline{S'};D)-\sum\{m(R_i)\ |\ i\neq 1,R_i\subseteq S'\}=c(S',\overline{S'};D)-p(S';D)
        \]
        Thus, $L=c(S',\overline{S'};D)-p(S';D)=F(x,y;\overline{D})-\sum_{i\neq 1}m(R_i)$ and (\ref{eq:computemu}) holds.
        \item Suppose $R_1\subseteq S'$. Because the existence of spanning trees is guaranteed, we have
        \[
            \textstyle c(S',\overline{S'};D)\geq p(S';D)=m(R_1)+\sum\{m(R_i)\ |\ i\neq 1,R_i\subseteq S'\}.
        \]
        Hence,
        \[
            \textstyle L\geq F(x,y;\overline{D})-\sum_{i\neq 1}m(R_i)=c(S',\overline{S'};D)-\sum\{m(R_i)\ |\ i\neq 1,R_i\subseteq S'\}\geq m(R_1).
        \]
        Thus, $\mu=\min\{g(x,y),m(R_1)\}$ and (\ref{eq:computemu}) also holds.
    \end{enumerate}
\end{proof}

\thmfixedk*
\begin{proof}
	$\Leftarrow$: Suppose $\{T_{u,i}\}_{u\in V_c,i\in[k]}$ is edge disjoint in $G(\{\lfloor Ub_e\rfloor\}_{e\in E})$, then
	\[
		T_B=\frac{M}{Nk}\cdot\max_{e\in E}\frac{1}{b_e}\sum_{T\in \{T_{u,i}\}}\I[e\in T]\leq\frac{M}{Nk}\cdot\max_{e\in E}\frac{\lfloor Ub_e\rfloor}{b_e}\leq\frac{M}{Nk}\cdot U.
	\]
	$\Rightarrow$: Suppose $\{T_{u,i}\}_{u\in V_c,i\in[k]}$ achieves $\frac{M}{Nk}\cdot U$ bandwidth runtime, then
	\[
		\max_{e\in E}\frac{1}{b_e}\sum_{T\in \{T_{u,i}\}}\I[e\in T]\leq U\qquad\Longrightarrow\qquad\sum_{T\in \{T_{u,i}\}}\I[e\in T]\leq Ub_e\quad\text{for all $e\in E$.}
	\]
	Since $\sum_{T\in \{T_{u,i}\}}\I[e\in T]$ must be an integer, the edge-disjointness trivially follows.
\end{proof}

\thmfixedkbinarysearch*
\begin{proof}
	$\Rightarrow$: The existence of edge-disjoint $\{T_{u,i}\}_{u\in V_c,i\in[k]}$ in $G(\{\lfloor Ub_e\rfloor\}_{e\in E})$ with $U<U^*$ simply contradicts $\frac{M}{Nk}\cdot U^*$ being the lowest bandwidth runtime. $\Leftarrow$: Let $\{T^*_{u,i}\}_{u\in V_c,i\in[k]}$ be the family of out-trees with lowest bandwidth runtime, then by theorem \ref{thm:fixedk}, it is edge-disjoint in $G(\{\lfloor Ub_e\rfloor\}_{e\in E})$ for all $U\geq U^*$. 
\end{proof}

\thmfixedkapprox*
\begin{proof}
    Let $U=\max_{e\in E}\lceil kb_e/x^*\rceil/b_e$ where $1/x^*=\max_{S\subset V,S\not\supseteq V_c}|S\cap V_c|/B^+_G(S)$. For each edge $(u,v)$ in $G(\lfloor Ub_e\rfloor)$, we have
    \[
        c(u,v;G(\lfloor Ub_e\rfloor))=\left\lfloor b_{(u,v)}\cdot\max_{e\in E}\frac{\lceil kb_e/x^*\rceil}{b_e}\right\rfloor\geq\left\lfloor b_{(u,v)}\cdot\frac{\lceil kb_{(u,v)}/x^*\rceil}{b_{(u,v)}}\right\rfloor=\lceil kb_{(u,v)}/x^*\rceil.
    \]
    Thus, each edge in $\vec{G}_k(\lfloor Ub_e\rfloor)$ has at least $k/x^*$ times the capacity in $\vec{G}_{x^*}$, so $\min_{v\in V_c}F(s,v;\vec{G}_k(\lfloor Ub_e\rfloor))\geq(k/x^*)\min_{v\in V_c}F(s,v;\vec{G}_{x^*})\geq|V_c|k$. Therefore, $\frac{M}{Nk}\cdot U$ is achievable and hence $U^*\leq U$ by theorem \ref{thm:fixedkbinarysearch}.
    \[
        \frac{U^*}{k}\bigg/\frac{1}{x^*}\leq\frac{U}{k}\bigg/\frac{1}{x^*}=\frac{\max_{e\in E}\lceil kb_e/x^*\rceil/b_e}{k/x^*}\leq\max_{e\in E}\frac{\lceil kb_e/x^*\rceil}{kb_e/x^*}\leq 1+\max_{e\in E}\frac{1}{kb_e/x^*}=1+\frac{x^*}{k\cdot\min_{e\in E}b_e}.
    \]
    The theorem trivially follows.
\end{proof}

\thmARlowerboundone*
\begin{proof}
    Given any $S\subset V$ satisfying $S\cap V_c\notin\{\emptyset,V_c\}$, consider the cut $(S,\bar{S})$. For any indivisible unit chunk $C$ of the data, there are three cases:
    \begin{enumerate}[label=(\alph*)]
        \item The allreduce result of $C$ is computed in $S$. In such case, at least one chunk $C$ must be sent from $\bar{S}$ to $S$ to be allreduced, and the allreduced chunk $C$ must also be sent from $S$ to $\bar{S}$.
        \item The allreduce result of $C$ is computed in $\bar{S}$. Similarly, at least one chunk $C$ needs to be sent from $S$ to $\bar{S}$ and then sent from $\bar{S}$ to $S$.
        \item The allreduce result of $C$ is computed in both $S$ and $\bar{S}$. Then, $S$ and $\bar{S}$ must also send at least one chunk $C$ to each other to be allreduced.
    \end{enumerate}
    Thus, every piece of data to be allreduced must pass through $(S,\bar{S})$ once in each direction. The lower bound (\ref{eq:ARlowerboundone}) immediately follows.
\end{proof}

\thmARlowerboundtwo*
\corARlowerboundtwo*

\thmARoptimal*
\begin{proof}
    If $|S^*\cap V_c|=N/2$, then the reduce-scatter/allgather bandwidth optimal runtime is $\frac{M/2}{B^+_G(S^*)}$. RS+AG gives allreduce runtime
    \[
        T_B=2\cdot\frac{M/2}{B^+_G(S^*)}=\frac{M}{B^+_G(S^*)}\leq M\brac{\min_{S\cap V_c\notin\{\emptyset,V_c\}} B^+_G(S)}^{-1}.
    \]
    By lower bound (\ref{eq:ARlowerboundone}), the allreduce is bandwidth optimal.\\

    If $S^*\cap V_c=\{v'\}$ and $\displaystyle\min_{S\cap V_c=\{v'\}}B^+_G(S)=\max_{v\in V_c}\min_{S\cap V_c=\{v\}}B^+_G(S)$, similarly, RS+AG gives allreduce runtime
    \[
        T_B=2\cdot\frac{M}{N}\cdot\frac{N-1}{B^+_G(S^*)}\leq\frac{2M(N-1)}{N}\brac{\min_{S\cap V_c=\{v'\}}B^+_G(S)}^{-1}=\frac{2M(N-1)}{N}\brac{\max_{v\in V_c}\min_{S\cap V_c=\{v\}}B^+_G(S)}^{-1}.
    \]
    By lower bound (\ref{eq:ARlowerboundtwo}), the allreduce is bandwidth optimal.
\end{proof}

\end{document}